\newtheorem{theorem}{\indent Theorem}
\newtheorem{lemma}{\indent Lemma}
\newtheorem*{proof}{\indent Proof}
\newtheorem{remark}{\indent Remark}
\newcommand{\Rmnum}[1]{\expandafter\@slowromancap\romannumeral #1@}
\begin{document}

\makeatletter
\let\myorg@bibitem\bibitem
\def\bibitem#1#2\par{%
  \@ifundefined{bibitem@#1}{%
    \myorg@bibitem{#1}#2\par
  }{%
    \begingroup
      \color{\csname bibitem@#1\endcsname}%
      \myorg@bibitem{#1}#2\par
    \endgroup
  }%
}

\makeatother 

\title{Byzantine-Resilient Over-the-Air Federated Learning under Zero-Trust Architecture}

\author{Jiacheng~Yao,
        Wei~Shi,
        Wei~Xu,~\IEEEmembership{Fellow,~IEEE,}
        Zhaohui~Yang,~\IEEEmembership{Member,~IEEE,}
        A.~Lee~Swindlehurst,~\IEEEmembership{Fellow,~IEEE,}
        and Dusit~Niyato,~\IEEEmembership{Fellow,~IEEE}
\thanks{J. Yao, W. Shi, and W. Xu are with the National Mobile Communications Research Laboratory, Southeast University, Nanjing 210096, China, and are also with the Purple Mountain Laboratories, Nanjing 211111, China (e-mail: \{jcyao, wshi, wxu\}@seu.edu.cn).}
\thanks{Z. Yang is with the Research Institute of Intelligent Networks, Zhejiang Lab, Hangzhou 311121, China, and also with the College of Information Science and Electronic Engineering, Zhejiang University, Hangzhou 310027, Zhejiang, China (e-mail: yang\_zhaohui@zju.edu.cn).}
\thanks{A. L. Swindlehurst is with the Center for Pervasive Communications and Computing, Henry Samueli School of Engineering, University of California at Irvine, Irvine, CA 92697 USA (e-mail: swindle@uci.edu).}
\thanks{D. Niyato is with the School of Computer Science and Engineering, Nanyang Technological University, Singapore 308232 (e-mail: dniyato@ntu.edu.sg).}
}

%

\maketitle
\begin{abstract}
Over-the-air computation (AirComp) has emerged as an essential approach for enabling communication-efficient federated learning (FL) over wireless networks. Nonetheless, the inherent analog transmission mechanism in AirComp-based FL (AirFL) intensifies challenges posed by potential Byzantine attacks. In this paper, we propose a novel Byzantine-robust FL paradigm for over-the-air transmissions, referred to as federated learning with secure adaptive clustering (FedSAC). FedSAC aims to protect a portion of the devices from attacks through zero trust architecture (ZTA) based Byzantine identification and adaptive device clustering. By conducting a one-step convergence analysis, we theoretically characterize the convergence behavior with different device clustering mechanisms and uneven aggregation weighting factors for each device. Building upon our analytical results, we formulate a joint optimization problem for the clustering and weighting factors in each communication round. To facilitate the targeted optimization, we propose a dynamic Byzantine identification method using historical reputation based on ZTA. Furthermore, we introduce a sequential clustering method, transforming the joint optimization into a weighting optimization problem without sacrificing the optimality. To optimize the weighting, we capitalize on the penalty convex-concave procedure (P-CCP) to obtain a stationary solution. Numerical results substantiate the superiority of the proposed FedSAC over existing methods in terms of both test accuracy and convergence rate.
\end{abstract}
\begin{IEEEkeywords}
Federated learning (FL), over-the-air computation (AirComp), zero trust architecture (ZTA), Byzantine attacks.
\end{IEEEkeywords}

\section{Introduction}
\label{sec:intro}

\IEEEPARstart{T}{he} integration of communication and artificial intelligence (AI) is anticipated to be a key application scenario in the upcoming sixth-generation (6G) wireless networks \cite{itu}. Future 6G networks will benefit from AI technology while also facilitating its deployment, namely AI for 6G and 6G for AI, thus enabling ubiquitous intelligence. In particular, recent advances in federated learning (FL) have introduced a promising distributed AI paradigm over wireless networks, addressing the demands of large-scale edge intelligent applications \cite{weixu,bianji1,zhu2023pushing,gomore}.
In a wireless FL network, distributed devices work together to train a shared learning model under the guidance of a parameter server (PS). The devices involved in the training procedure iteratively update the model by exchanging model parameters instead of transferring raw data, which helps preserve data privacy because sensitive information is neither directly shared among devices nor with the PS \cite{ZTE}. However, the massive transmission of model parameters has made communication a major bottleneck in wireless FL \cite{mzchen,dva}. 
Particularly in the uplink transmission,  a significant number of local devices are involved in transmitting local gradients to the PS for conducting model updates, which poses formidable challenges for conventional communication systems. Hence, certain prior efforts have explored selecting a subset of the devices to partake in the updates of each round, aiming to mitigate excessive communication overhead \cite{schedule,yhao}. However, this approach inevitably comes at the expense of convergence rate and accuracy.

To facilitate the deployment of FL in wireless networks, over-the-air computation (AirComp) is viewed as a key enabling technique \cite{gxzhu,imperfect,ymshi2}. In this approach, the uncoded signals at distributed devices undergo amplitude modulation and are then concurrently transmitted by reusing the available radio resources. After preprocessing at the transmitter, the automatic computation of the objective function over the air is facilitated by the superposition property of multiple access (MAC) channels, thus giving rise to the term AirComp. It seamlessly merges communication and computational functions, which aligns with the requirements of communication-efficient FL \cite{yhao2}. Harnessing the advantages of AirComp, the uplink transmission and aggregation of the model parameters in FL tasks are integrated over the air, which allows for simultaneous access by numerous devices within a limited bandwidth and a substantial reduction in both transmission and computational latency. In recent investigations, researchers have delved into the specific design of AirComp-based FL (AirFL), covering power control \cite{power2}, device scheduling \cite{scheduleAir}, and beamforming~\cite{lee}.

As FL continues to expand in applications such as smart cities, autonomous driving networks, and industrial internet of things (IoT), the involvement of a large number of edge devices poses significant security risks. Byzantine attacks, where unreliable or malicious devices disrupt the learning process, can severely compromise both the accuracy and reliability of the system \cite{Byz,trustcom,aloqaily}. What makes the situation worse is when FL is deployed over wireless networks via AirComp, where the open nature of wireless channels and nonorthogonal superimposition characteristics in AirComp exacerbate the risks \cite{qhwu}. Although it offers significant performance gains, this inherent analog transmission mechanism in AirComp is a double-edged sword.
Unlike traditional FL schemes, AirFL does not acquire individual gradients from each device. Instead, it relies on the superposition of all local gradients, which makes AirFL more vulnerable to malicious attacks \cite{shi2024empowering,airTSP}. This is especially for Byzantine attacks, where attackers can exploit this vulnerability by sending arbitrary gradient values, thereby interfering with the overall model aggregation process and hindering the convergence of the FL algorithm. To counter the Byzantine attacks, traditional approaches aim at excluding gradient outliers, such as in the Krum and Multi-Krum algorithms \cite{krum}. However, these approaches cannot be directly applied to AirFL due to the lack of access to specific gradients from individual devices. 

Given this background, some preliminary studies have considered robust design for combating Byzantine attacks under AirFL \cite{bev,ymshi,park,liye1,liye2}. In \cite{bev}, a best effort voting (BEV) power control policy was proposed to enhance robustness against Byzantine attacks, aiming at boosting the strength of useful signals using full-power transmission. While BEV enhances the fidelity of the legitimate signals, it neither directly suppresses nor filters out Byzantine interference. In addition to BEV, the AirComp-based Weiszfeld algorithm was proposed in \cite{ymshi} to obtain a smoothed geometric median aggregation approach against Byzantine attacks. In another approach, the authors in \cite{onebit} exploited a one-bit gradient quantization and hierarchical voting framework, which is theoretically proven to be robust against Byzantine devices. 
However, low-precision quantization compromises performance and majority voting is inadequate in withstanding formidable Byzantine attacks. Furthermore, to promote the application of the existing robust aggregation algorithms, multiple studies, e.g., \cite{park,liye1,liye2}, proposed to create multiple individual gradients via device clustering and hierarchical AirComp. In general, the core idea of these works is to reduce the number of honest gradients contaminated by Byzantine attackers through user grouping and techniques that exclude anomalous gradients. As shown in Fig. \ref{fig:air}, devices within each cluster aggregate to obtain a virtual gradient via AirComp, allowing multiple virtual gradients from different clusters to be received at the PS. Based on these gradients, the PS can apply existing robust aggregation algorithms to effectively exclude abnormal gradients, ensuring enhanced resilience against Byzantine devices. For instance, the authors of \cite{park} suggested incorporating a reference gradient, computed directly by the PS, to effectively identify and filter out malicious gradients. Moreover, the geometric median aggregation method was adopted in \cite{liye1,liye2} for robust aggregation with a theoretical convergence guarantee. However, existing works primarily focused on adopting various robust aggregation algorithms while overlooking the significance of device clustering design, which was often done randomly. This oversight limits their effectiveness in enhancing robustness against Byzantine attacks. Furthermore, the absence of dynamic authentication in all robust schemes mentioned above prevents active suppression of Byzantine attacks, necessitating passive defense strategies instead. This limitation also significantly constrains the FL performance.


\begin{figure}[!t]
  \centering
  \centerline{\includegraphics[width=3.2in]{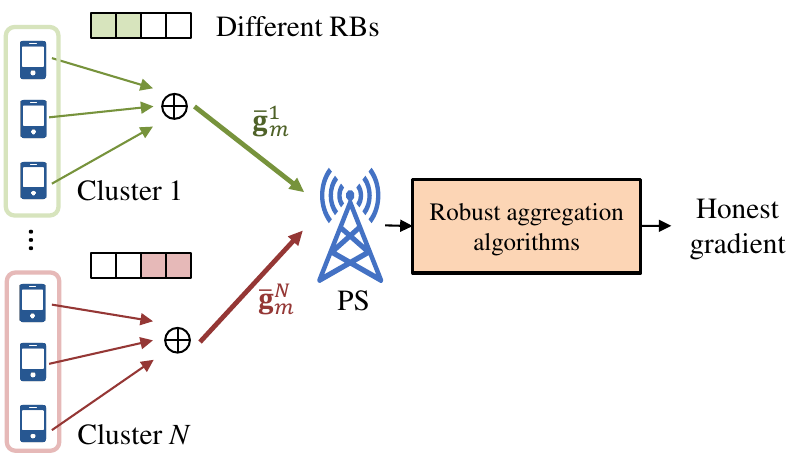}}
  \caption{Architecture of existing robust aggregation methods via hierarchical AirComp and device clustering.}\label{fig:air}
\end{figure}

To address the aforementioned challenges, zero trust architecture (ZTA), characterized by its core principle of ``never trust and always verify," is widely acknowledged as a revolutionary approach to bolster security in 6G networks \cite{ZTAo,zta1,bianji2}. Specifically, in ZTA, no access device is deemed inherently trustworthy. Instead, it relies on continuous trust evaluation and authentication mechanisms to identify and mitigate attacks \cite{zta2}. The core principle of ZTA aligns well with the requirements of AirFL to address the Byzantine attacks. Hence, building upon ZTA, we explore an enhanced Byzantine-robust design for AirFL to further improve the convergence performance and security in this paper. We refer to our approach as federated learning with secure adaptive clustering (FedSAC). The main contributions of our work are summarized as follows:
\begin{itemize}
    \item We establish a universal framework of clustering-based robust aggregation method in which the devices are partitioned into distinct clusters, and AirComp is performed separately within each cluster. Meanwhile, each cluster uses different time/frequency resources for transmission, thus generating multiple gradient values from each cluster to support various existing robust aggregation algorithms. Convergence of the robust aggregation scheme is assured in theory and the impacts of the clustering and the weighting factors assigned to each device in aggregation are also quantitatively captured. 
    \item Based on the analytical results, we formulate a long-term joint optimization problem for the clustering and weighting allocation to enhance the convergence performance. This problem is decomposed into subproblems in~each communication round using the Lyapunov~method. To further streamline the proposed joint optimization problem, we introduce the ZTA-empowered FedSAC framework, which integrates Byzantine identification, and the joint design of the clustering and weighting allocation. Specifically, we first employ a continuous Byzantine identification scheme for each round based on historical reputation, which serves as a prerequisite for tailoring the design of the clustering and weighting factors against Byzantine devices. For clustering optimization, we introduce a sequential clustering scheme based on equivalent channel conditions under specified weighting factors and justify its optimality. By incorporating the sequential clustering into the original problem, we transform the joint optimization problem into an equivalent optimization of the weighting factors. Since the equivalent weighting allocation problem involves mixed integer nonlinear programming (MINLP), we leverage the penalty convex-concave procedure (P-CCP) to reformulate the integer constraints, fostering the realization of a stable solution.
    \item We demonstrate the effectiveness of the proposed FedSAC approach via numerical results. In particular, we have significantly accelerated the convergence, attaining performance levels that approach attack-free scenarios. Compared with existing random clustering-based methods, the proposed FedSAC approach is shown to achieve remarkable robustness across diverse signal-to-noise ratios (SNRs) and Byzantine attack configurations. Moreover, the proposed approach achieves good performance with only a small number of clusters, thereby minimizing wireless resource consumption and maximizing the benefits of AirComp.
\end{itemize}

The rest of this paper is organized as follows. In Section~\Rmnum{2}, we describe the AirFL and Byzantine attack models. Section \Rmnum{3} introduces the clustering-based robust aggregation method and formulates the joint optimization problem based on a one-step convergence analysis. In Section~\Rmnum{4}, we outline the general framework of the proposed ZTA-empowered FedSAC approach. Simulation results and conclusions are given in Sections \Rmnum{5} and \Rmnum{6}, respectively. 

\emph{Notation:} Boldface lowercase (uppercase) letters represent vectors (matrices). The set of real numbers is denoted by $\mathbb{R}$. The set $[N]$ is equal to $\{1,2,\cdots,N\}$. The superscripts $(\cdot)^T$ and $(\cdot)^\ast$ stand for the transpose and conjugate operations, respectively. The symbol $\Re(\cdot)$ returns the real part of a complex number. The operator $|\cdot|$ returns the absolute value when the input is a complex scalar or the number of elements when the input is a set. The operator $\left\|\cdot\right\|$ denotes the Euclidean norm of a vector. A circularly symmetric complex Gaussian distribution is denoted by ${\cal {CN}}$, and $\mathbb{E}[\cdot]$ denotes the expectation. Moreover, the summaries of symbols are presented in Table \ref{NOTATION}.

\section{System Model}
\label{sec:sys}
We consider a wireless FL framework in which $K$ distributed devices jointly train a shared machine learning model under the orchestration of a central PS. We note that the PS is always trusted during the training process. All devices use their local datasets $\{\mathcal{D}_k\}_{k=1}^K$ to train the model parameters, $\mathbf{w}\in \mathbb{R}^d$, by minimizing a global loss function defined as 
\begin{align}\label{e1}
F(\mathbf{w})=\sum_{k=1}^K \frac{|\mathcal{D}_k|}{\sum_{i=1}^K |\mathcal{D}_i|}  F_k(\mathbf{w},\mathcal{D}_k).
\end{align}
In (\ref{e1}), $F_k(\mathbf{w},\mathcal{D}_k)$ is the local loss function at device $k$. It is formulated as
\begin{align}
F_k(\mathbf{w},\mathcal{D}_k)=\frac{1}{\left \vert \mathcal{D}_k\right \vert}\sum_{\mathbf{u}\in \mathcal{D}_k} \mathcal{L}(\mathbf{w},\mathbf{u}),
\end{align}
where $\mathbf{u}$ denotes the data sample and $\mathcal{L}(\mathbf{w},\mathbf{u})$ represents the sample-wise loss function. Given the system's heterogeneity, we consider that local datasets across different devices are non-independent and non-identically distributed (non-IID). Without loss of generality, we assume that the sizes of all local datasets are the same, i.e., each device is expected to play an equal role in model training, and an extension for different sizes of local datasets is straightforward.

\begin{table}[t]
    \caption{Summary of Main Symbols}
    \begin{center}
        \begin{tabular}{|l|p{6cm}|}
            \hline
            Notation & Definition\\
            \hline
            $\mathcal{D}_k$ & Local dataset of device $k$\\
            $F(\mathbf{w})$ & Global loss function\\
            $F_k(\mathbf{w},\mathcal{D}_k)$ & Local loss function of device $k$\\
            $\mathbf{w}_m$ & Model parameter at round $m$\\
            $\mathbf{g}_{m}^k$; $\bar{\mathbf{g}}_{m}^n$ & Local gradient of device $k$ at round $m$; aggregated gradient of cluster $n$ at round $m$\\
            $\alpha_{m,k}$ & Weighting factor assigned to device $k$ at round $m$\\
            $\rho_{m,k}$ & Preprocessing factor for device $k$ at round $m$\\
            $\zeta_{m,n}$ & Scaling factor for cluster $n$ at round $m$\\
            $\gamma_{m,k}$; $\bar{\gamma}_{m,k}$ & Contribution (normalized) of device $k$ at round $m$\\
            $r_{m,k}$ & Reputation of device $k$ at round $m$\\
            $q_{m,k}$ & Virtual queue for fairness constraint\\
            $\eta$ & Learning rate\\
            $L$ & Lipschitz parameter of $F_k(\cdot)$\\
            $\mu$ & Constant for Polyak-{\L}ojasiewicz Inequality\\
            $\delta_k$ & Gradient divergence of device $k$\\
            $\gamma_{\mathrm{th}}$ &  Threshold for channel truncation\\
            $b$ &  Expected long-term average contribution\\
            $\alpha$ &  Penalty parameter for absence\\
            $\mathcal{A}_m$ &  Set of the activated devices at round $m$\\
            $\mathcal{B}_m$ & Set of clusters passing the inspection at round $m$\\
            $\mathcal{C}_{m,n}$ & Set of cluster $n$ at round $m$\\
            $\mathcal{D}_m$ & Set of activated and honest devices at round $m$\\
            $\mathcal{S}_m$ &  Set of devices with effective participation at round~$m$\\
            \hline
        \end{tabular}
        \label{NOTATION}
    \end{center}
\end{table}

\subsection{AirFL Model}

To minimize the loss function in (\ref{e1}), we apply the widely used FedSGD algorithm. In addition, AirComp is exploited for efficient transmission and model aggregation.  Specifically, in the $m$-th round, the PS first broadcasts the up-to-date global model $\mathbf{w}_{m}$ to all devices. Each device $k$ initializes its local model as $\mathbf{w}_{m}$ and computes its local stochastic gradient as 
\begin{align}\label{e3}
    \mathbf{g}_m^k =\nabla F_k (\mathbf{w}_m,\mathbf{\xi}_m^k),
\end{align}
where $\mathbf{\xi}_{m}^k$ represents the mini-batch selected from $\mathcal{D}_k$ with size $b$. Then, all devices report their local gradients to the PS for further aggregation. Using AirComp, the procedures of transmission and computation are integrated as
\begin{align}\label{eq4}
    \mathbf{y}_m= \sum_{k=1}^K h_{m,k} \beta_k \rho_{m,k} \mathbf{g}_{m}^k +\mathbf{z},
\end{align}
where $h_{m,k}\sim \mathcal{CN}(0,1)$ is the small-scale fading between the PS and device $k$ in the $m$-th round, $\beta_k$ is the large-scale fading, $\rho_{m,k}$ represents the pre-processing factor at device $k$ for computation tasks, and $\mathbf{z}\sim \mathcal{CN}(\mathbf{0},\sigma^2\mathbf{I})$ denotes additive Gaussian noise with zero mean and variance $\sigma^2$. Thanks to the analog aggregation in (\ref{eq4}), aggregated local gradients in AirComp prevent the PS from identifying individual device gradients, thus reducing the potential of privacy leakage \cite{leakage}. In this sense, even if the PS is semi-honest, the unique advantages of AirComp limit privacy risks.

According to \cite{gxzhu}, the pre-processing factor is set as
\begin{align}\label{eq6}
\rho_{m,k} =\left \{  
\begin{array}{ll}
\frac{\zeta_{m}\alpha_{m,k}}{h_{m,k}\beta_k},& \vert h_{m,k}\vert \geq \gamma_{\mathrm{th}},\\
0,&\vert h_{m,k}\vert<\gamma_{\mathrm{th}},
\end{array}
\right.
\end{align}
where $\gamma_{\mathrm{th}}$ is a predetermined threshold to avoid a catastrophic impact of deep fading\footnote{In AirFL, the model upload is synchronized for all activated devices and hence stragglers can be avoided. Also, the probability of being truncated is the same for all devices, thus no fairness issues arise.}, $\zeta_{m}$ is a scaling factor for ensuring the transmit power constraint, and $\alpha_{m,k}$ denotes the weighting factor assigned to device $k$ at round $m$. The weighting factors are normalized such that $\sum_{k=1}^K \alpha_{m,k}=1$. In other approaches, all devices are allocated the same weight, i.e., $\alpha_{m,k}=\frac{1}{K}$, for equal contribution. Here, we consider a more general case with arbitrary weighting factors to achieve further performance improvement \cite{adaweight}. With the pre-processing in (\ref{eq6}), devices with poor channel conditions are truncated, ensuring a large scaling factor, $\zeta_m$, which amplifies transmitted signals to better combat additive noise.

We denote the set of the activated devices at the $m$-th round by 
\begin{align}
    \mathcal{A}_m=\left \{k\left \vert \left\vert h_{m,k}\right\vert  \geq \gamma_{\mathrm{th}}\right.\right\}.
\end{align}
For simplicity, we consider a uniform transmit power budget $P_{\max}$ for each device and choose the factor $\zeta_m$ to guarantee $\left \Vert \rho_{m,k}\mathbf{g}_m^k \right \Vert^2 \leq P_{\max}$.
We assume that the local gradients are uniformly bounded by a finite value $G$ \cite{jundu}, i.e., $\left \Vert \mathbf{g}_m^k \right \Vert \leq G$, and hence the factor $\zeta_m$ is selected as
\begin{align}
\zeta_m = \frac{\sqrt{P_{\max}}}{G}\min_{k\in\mathcal{A}_m}\left\{ \frac{\vert h_{m,k}\vert \beta_k}{\alpha_{m,k}}\right \}.
\end{align}
At the receiver, the PS updates the global model as
\begin{align}\label{eq7}
\mathbf{w}_{m+1}&=\mathbf{w}_{m}-\eta \frac{\Re\{\mathbf{y}_m\}}{\zeta_m}\nonumber \\
&=\mathbf{w}_{m}-\eta \sum_{k\in \mathcal{A}_m}\alpha_{m,k} \mathbf{g}_m^k+ \bar{\mathbf{z}},
\end{align}
where $\eta$ denotes the learning rate, and $\bar{\mathbf{z}}$ is the equivalent noise with distribution $\mathcal{N}(\mathbf{0},\frac{\eta^2 \sigma^2}{2\zeta_m^2}\mathbf{I})$. 
The FL algorithm iterates the above procedure until convergence and achieves an effective global model.

\subsection{Byzantine Attack Model}
In this paper, we assume that model poisoning attacks are performed by Byzantine attackers based on the characteristics of AirFL \cite{bev,lee,liye1,liye2}. Specifically, each attacker actively manipulates the model updates by sending arbitrarily malicious gradients to the PS, thereby causing model training failure and FL divergence. In addition, there are $M<K$ Byzantine devices and $K-M$ honest devices, and the message reported by device $k$ in (\ref{eq4}) is replaced by 
\begin{align}
    \mathbf{g}_m^k = \left \{ 
    \begin{array}{ll}
        \nabla F_k(\mathbf{w}_m ,\xi_m^k), &\text{if $k$-th device is honest},  \\
        *, & \text{otherwise},
    \end{array}
    \right.
\end{align}
where $*$ denotes an arbitrary value.  Unlike wireless FL algorithms implemented using classical orthogonal access methods, the characteristic of non-orthogonal access in AirComp facilitates the concealment of Byzantine devices among the honest ones, significantly aggravating the challenge of identifying and mitigating Byzantine attacks. Therefore, the design of tailored AirFL frameworks and corresponding robust algorithms becomes imperative to mitigate potential adverse effects of malicious attacks. In the next section, we develop a general framework for robust aggregation in AirFL to combat the Byzantine attacks.

\section{Clustering-based Robust Aggregation Scheme for AirFL}

To reduce a negative impact of malicious devices, we exploit the hierarchical AirComp method shown in Fig. \ref{fig:air}, in which all $K$ devices are uniformly divided into $N$ clusters and all clusters participate in the FL procedure with distinct time/frequency resource blocks (RBs). Under this framework, we introduce multiple local updates for AirFL, which helps support existing Byzantine-robust aggregation methods. In this section, we first elucidate the particulars of the robust aggregation scheme, subsequently formulating an additional optimization problem for the clustering based on our theoretical convergence analysis.

\subsection{Scheme Design}
Within a specific cluster $n$ at round $m$, denoting $\mathcal{C}_{m,n}$, a step similar to (\ref{eq4}) is adopted. Then the aggregated local update for the cluster $\mathcal{C}_{m,n}$ is given by
\begin{align}
    \bar{\mathbf{g}}_m^n =\sum_{k\in \mathcal{C}_{m,n}\cap \mathcal{A}_m} \alpha_{m,k} \mathbf{g}_{m}^k+ \mathbf{z}_n,
\end{align}
where $\mathbf{z}_n\sim \mathcal{N}(\mathbf{0},\frac{ \sigma^2}{2\zeta_{m,n}^2}\mathbf{I})$ and
\begin{align}\label{eq10}
    \zeta_{m,n}=\frac{\sqrt{P_{\max}}}{G}\min_{k\in \mathcal{C}_{m,n} \cap \mathcal{A}_m}\left\{ \frac{\vert h_{m,k}\vert \beta_k}{\alpha_{m,k}}\right \}.
\end{align}
Through hierarchical AirComp, we construct $N$ virtual local gradients, i.e., $\{\bar{\mathbf{g}}_m^n\}_{n=1}^N$, corresponding to the $N$ clusters.

Building upon the virtual local gradients, we implement some widely used robust aggregation methods. In this paper, we adopt the idea of the classical FLTrust scheme in \cite{fltrust}, which is also consistent with that in \cite{tsp2}. In this approach, the PS collects a small and clean root training dataset and generates its own update, $\mathbf{g}_m^0$, based on the root dataset. Then, upon receiving the aggregated local updates $\{\bar{\mathbf{g}}_m^n\}_{n=1}^N$, the PS exploits $\mathbf{g}_m^0$ to identify contaminated updates via cosine similarity \cite{cosine} and excludes them from the aggregation process. In particular, the local update $\bar{\mathbf{g}}_m^n$ is excluded if the cosine similarity between $\bar{\mathbf{g}}_m^n$ and $\mathbf{g}_m^0$ is less than a predetermined threshold $\varrho$.
By selecting an appropriate $\varrho$, we can strike a balance between robustness against Byzantium and unbiasedness of the global model. 
It is noteworthy that the robust aggregation method employed in our scheme is capable of effectively handling any number of Byzantine devices and exhibits commendable robustness \cite{tsp2}. Apart from the considered method, other state-of-art Byzantine-robust methods, e.g., the median-based algorithms in \cite{tspbay,trimean}, can also be directly applied. Nonetheless, a prerequisite for the efficacy of these Byzantine robust algorithms is $M<\frac{N}{2}$, emphasizing the need for a sufficient number of clusters. This requirement unavoidably increases resource consumption and significantly diminishes the advantages offered by AirComp. Hence, without loss of generality, the subsequent optimization design will be based on the adopted robust aggregation method.

After eliminating the contaminated updates, the model update in (\ref{eq7}) is modified as
\begin{align}\label{eq11}
    \mathbf{w}_{m+1}&=\mathbf{w}_{m}-\eta \sum_{n\in \mathcal{B}_m}\bar{\mathbf{g}}_m^n \nonumber \\
    &=\mathbf{w}_m -\eta \sum_{k\in\mathcal{S}_m} \alpha_{m,k}\mathbf{g}_m^k - \eta\sum_{n\in \mathcal{B}_m} \mathbf{z}_n,
\end{align}
where $\mathcal{B}_m$ represents the set of clusters that have passed the inspection, and $\mathcal{S}_m$ denotes the set of devices successfully participating in the aggregation at round $m$. The set $\mathcal{S}_m$ is given by
    $\mathcal{S}_m = \bigcup_{n\in \mathcal{B}_m} \left(\mathcal{C}_{m,n}\cap \mathcal{A}_m\right)$.

\subsection{Convergence Analysis}
To facilitate convergence analysis under the robust aggregation, we make some popular assumptions on the loss functions, which are used for example in~\cite{mzchen,gxzhu,optimized} and entail no loss of generality.

\emph{Assumption 1}: The local loss functions $F_k(\cdot)$ are differentiable and have $L$-Lipschitz gradients, which implies
\begin{align}
F_k(\mathbf{w})\leq  F_k(\mathbf{v}) +\nabla F_k(\mathbf{v})^T (\mathbf{w}-\mathbf{v})+\frac{L}{2}\Vert \mathbf{w}-\mathbf{v}\Vert^2.
\end{align}

\emph{Assumption 2}: The divergence between the stochastic  and global gradients is bounded, i.e., $\left\Vert\mathbf{g}_m^k-\nabla F(\mathbf{w}_m) \right \Vert\leq \delta_k$.

\emph{Assumption 3} (Polyak-{\L}ojasiewicz Inequality): We denote the minimum value of the global loss function as $F^*$. There exists a constant $\mu>0$ such that the global loss function satisifies
\begin{align}
    \left\Vert \nabla F(\mathbf{w})\right \Vert^2 \geq 2\mu \left(F(\mathbf{w})-F^*\right ).
\end{align}


Based on the above assumptions and the model update in (\ref{eq11}), we analyze the convergence of a specific communication round in the following lemma. To explicitly analyze the convergence behavior, we define the expected reduction in the loss function at round $m$ by $g_m\triangleq \mathbb{E}\left[F(\mathbf{w}_{m+1})\right]-F(\mathbf{w}_m)$, where the expectation is taken over the additive noise. 

\begin{figure*}
\begin{align}\label{eq14}
    g_m \leq& -\frac{\eta}{2}\left (\sum_{k\in \mathcal{S}_m}\alpha_{m,k}\right )\left \Vert \nabla F(\mathbf{w}_m)\right \Vert^2  -\frac{\eta\!-\!L\eta^2}{2}\sum_{k\in\mathcal{S}_m}\alpha_{m,k} \left(\left \Vert  \mathbf{g}_m^k \right \Vert^2
    -\frac{\eta}{\eta-L\eta^2}\delta_k^2\right)+\frac{L\eta^2}{2}\sigma_m^2\triangleq \bar{g}_m,
\end{align}
\hrulefill
\end{figure*}

\begin{lemma}\label{lemma1}
The expected reduction in the loss function for the considered robust aggregation scheme at round $m$,  denoted by $g_m$, is bounded by (\ref{eq14}) at the top of the next page, where $\sigma_m^2\triangleq \sum_{n\in\mathcal{B}_m} \frac{ \sigma^2}{2\zeta_{m,n}^2}$.
\end{lemma}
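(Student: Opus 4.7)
The plan is to start from the $L$-smooth descent inequality of Assumption~1, substitute the explicit update rule (\ref{eq11}), take expectation over the cluster-wise aggregation noise $\{\mathbf{z}_n\}_{n\in\mathcal{B}_m}$, and then process the resulting linear and quadratic terms separately. Note that Assumption~3 is not actually needed for this one-step bound; it will presumably be consumed later for an end-to-end convergence rate.

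First, I would apply Assumption~1 with $\mathbf{w}=\mathbf{w}_{m+1}$ and $\mathbf{v}=\mathbf{w}_m$ and substitute $\mathbf{w}_{m+1}-\mathbf{w}_m=-\eta\sum_{k\in\mathcal{S}_m}\alpha_{m,k}\mathbf{g}_m^k-\eta\sum_{n\in\mathcal{B}_m}\mathbf{z}_n$. Taking expectation over the noise, $\mathbb{E}[\mathbf{z}_n]=\mathbf{0}$ eliminates the noise from the inner-product term and kills the signal--noise cross-term in the squared-norm term, while the pure-noise piece collapses to $\sum_{n\in\mathcal{B}_m}\mathbb{E}\|\mathbf{z}_n\|^2$, matching the $\sigma_m^2$ appearing in the lemma up to the dimensional convention used in the paper.

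For the inner product I would split each per-device contribution via the polarization identity $a^T b=\tfrac{1}{2}(\|a\|^2+\|b\|^2-\|a-b\|^2)$ with $a=\nabla F(\mathbf{w}_m)$ and $b=\mathbf{g}_m^k$, and then bound $\|\nabla F(\mathbf{w}_m)-\mathbf{g}_m^k\|^2\le\delta_k^2$ using Assumption~2. This yields precisely the $-\tfrac{\eta}{2}(\sum_k\alpha_{m,k})\|\nabla F(\mathbf{w}_m)\|^2$ term and a $-\tfrac{\eta}{2}\sum_k\alpha_{m,k}\|\mathbf{g}_m^k\|^2$ contribution, offset by $+\tfrac{\eta}{2}\sum_k\alpha_{m,k}\delta_k^2$. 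For the signal-squared piece $\|\sum_k\alpha_{m,k}\mathbf{g}_m^k\|^2$, I would apply Jensen's inequality to $\|\cdot\|^2$, using crucially that $\sum_{k\in\mathcal{S}_m}\alpha_{m,k}\le\sum_{k=1}^K\alpha_{m,k}=1$ so that the bound reduces to $\sum_k\alpha_{m,k}\|\mathbf{g}_m^k\|^2$. This contributes $+\tfrac{L\eta^2}{2}\sum_k\alpha_{m,k}\|\mathbf{g}_m^k\|^2$, which combines with the $-\tfrac{\eta}{2}\sum_k\alpha_{m,k}\|\mathbf{g}_m^k\|^2$ from the previous step to produce the stated coefficient $-\tfrac{\eta-L\eta^2}{2}$; re-factoring the $\delta_k^2$ piece inside that bracket yields (\ref{eq14}).

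The main obstacle will be the Jensen step. A naive weighted Cauchy--Schwarz would yield $\|\sum_k\alpha_{m,k}\mathbf{g}_m^k\|^2\le(\sum_k\alpha_{m,k})\sum_k\alpha_{m,k}\|\mathbf{g}_m^k\|^2$, which matches the clean form of the lemma only because $\sum_{k\in\mathcal{S}_m}\alpha_{m,k}\le 1$. Since channel-truncated and Byzantine-contaminated devices are simply excluded from $\mathcal{S}_m$ rather than reweighted, this partial sum can be strictly less than one, and absorbing that prefactor on the correct side of the inequality is the single algebraic observation that makes the accounting collapse to the announced form.
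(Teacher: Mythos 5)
Your proposal is correct and follows essentially the same route as the paper's own proof in Appendix A: the $L$-smooth descent inequality with the update (\ref{eq11}), zero-mean noise eliminating cross-terms and leaving $\sigma_m^2$, the polarization identity plus Assumption~2 for the inner product, and Jensen/convexity of $\Vert\cdot\Vert^2$ with $\sum_{k\in\mathcal{S}_m}\alpha_{m,k}\leq 1$ for the quadratic term. Your observations that Assumption~3 is not needed here and that the partial weight sum being at most one is the key step for the Jensen bound are both accurate and match the paper's steps (b) and (c).
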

\begin{proof}
    Please refer to Appendix \ref{app_lemma1}. \hfill $\square$
\end{proof}

\begin{remark}
    Defining 
    \begin{align}
        \gamma_{m,k}\triangleq \Vert \mathbf{g}_m^k \Vert^2 -\frac{\eta}{\eta-L\eta^2}\delta_k^2,
    \end{align} 
    we theoretically characterize the impact of incorporating device $k$ into the $m$-th round of FL training, which can be interpreted as the contribution of device $k$ in the $m$-th round. The contribution is scaled by the weighting factor $\alpha_{m,k}$. 
    For honest devices, the parameter $\delta_k$ is contingent solely upon the degree of non-IID local datasets, with its magnitude being constrained. Consequently, the participation of honest devices yields positive benefits. Conversely, for Byzantine devices, the local gradient can be any value, leading to a substantial deviation from the global gradient and, consequently, a negative contribution. The definition of contribution aligns with the intuitive understanding presented in existing studies \cite{schedule,jundu}. Devices exhibiting larger local gradient values exert a more significant influence on global model training compared with those with smaller values. Hence, the magnitude of $\Vert \mathbf{g}_m^k\Vert$ commonly serves as an importance indicator for device scheduling \cite{schedule}. The derived theoretical value of the contribution incorporates the impact of~data heterogeneity, constituting a more universally applicable characterization.
\end{remark}

\begin{remark}
    By exploiting the robust aggregation method, with a sufficiently small learning rate, the convergence of the FL algorithm is always guaranteed. Concretely, we can exclude gradients that deviate significantly from the global gradient, thereby ensuring a positive contribution $\gamma_{m,k}$ from participating devices in $\mathcal{S}_m$. The noise term is linearly related to $\eta^2$ and hence its effect is a higher-order term, which can be safely ignored as $\eta\to 0$. Therefore, with these assumptions, we conclude that $g_m\leq \bar{g}_m<0$, indicating that the loss function is monotonically decreasing with the number of communication rounds. Given that the objective of FL is to minimize the loss function with a lower bound, achieving final convergence is guaranteed.
\end{remark}




While existing clustering-based robust aggregation schemes effectively ensure convergence, these approaches remain a passive defense against Byzantine attacks. They lack the capability for identification and targeted clustering optimization specifically tailored for Byzantine devices, thereby degrading the convergence rate and achievable performance. This issue will be further explored in the next subsection.

\subsection{Optimization Problem Formulation}
To enhance convergence, it is imperative to optimize the algorithm design in each round with the aim of minimizing the value of the loss function. For a tractable objective, we exploit the definition of $g_m$ and reformulate the expected value of the loss function in the $M$-th round as 
\begin{align}\label{eq18}
    \mathbb{E}[f(\mathbf{w}_M)] =\sum_{m=0}^{M-1}g_m +F(\mathbf{w}_0).
\end{align}
As $F(\mathbf{w}_0)$ in (\ref{eq18}) is a constant, it is therefore equivalent to minimize the sum $\sum_{m=0}^{M-1}g_m$. Since it is challenging to accurately obtain the actual value of $g_m$ at the PS, we resort to utilizing the upper bound, $\bar{g}_m$, as a surrogate. Examination of $\bar{g}_m$ in (\ref{eq14}) reveals that it primarily comprises two components, the positive contribution resulting from the participation of honest devices and the negative impact of additive noise. Both aspects are intricately linked to the specific choice of clustering and weighting factors. Accordingly, we formulate the  joint optimization problem:
\begin{alignat}{2}
    (\text{P1})\enspace\! & \mathop{\text{minimize}}_{\mathcal{U}_1,\cdots,\mathcal{U}_\infty} \quad \lim_{M\to \infty} \sum_{m=1}^M \bar{g}_m \nonumber \\
    &\text{subject to} \quad \text{C}_1:\sum_{k=1}^K \alpha_{m,k}=1,\enspace \forall m\nonumber \\
    &\quad \quad \quad \quad \quad \text{C}_2: \left | \mathcal{C}_{m,n} \right | =\bar{K}, \enspace n\in [N],\enspace \forall m\nonumber \\
    &\quad \quad \quad \quad \quad \text{C}_3:\! \lim_{M\!\to\infty} \!\sum_{m=1}^{M\!-\!1}\!\frac{\bar{\gamma}_{m,k}\alpha_{m,k}}{M} \! \geq \!b,  k\!\in\! [K],
\end{alignat}
where $\mathcal{U}_m \triangleq \left\{ \{\mathcal{C}_{m,n}\}_{n\in[N]}, \{\alpha_{m,k}\}_{k\in[K]} \right \}$ denotes the set of optimization variables for the $m$-th round, $\bar{K}\triangleq K/N$ represents the number of devices in each cluster and it is assumed for simplicity to be an integer, $\bar{\gamma}_{m,k}\triangleq \frac{\gamma_{m,k}}{\sum_{i=1}^K \gamma_{i,k}}$ denotes the normalized contribution of device $k$, and $b$ denotes the target average contribution of each device. The introduction of variable weighting factors in each round may compromise the fairness among devices, so we add a fairness constraint $\text{C}_3$ that is set to ensure that the long-term average contribution of each device is at least greater than $b$~\cite{fair}. The aforementioned problem in (P1) primarily encounters two challenges. Firstly, to minimize $\bar{g}_m$, accurate identification of Byzantine devices is crucial as prior information, which is unavailable at the PS. Secondly, it entails long-term constraints, introducing causal issues into instantaneous optimization. 


To enable an independent optimization in each communication round, we first focus on reformulation of the long-term constraint $\text{C}_3$ by exploiting the well-known Lyapunov method \cite{fair}. In particular, we introduce a virtual queue $q_{m,k}$ for each device $k$ to measure the gap between the cumulative contribution up to round $m$ and the fairness target:
\begin{align} \label{virtualq}
    q_{m,k}=\max\left\{ q_{m-1,k}+b-\bar{\gamma}_{m-1,k}\alpha_{m-1,k},0\right\}.
\end{align}
Next, based on the drift-plus-penalty algorithm of Lyapunov optimization, the original problem in (P1) is divided into separate subproblems in each communication round. The $m$-th subproblem is:
\begin{align}
    (\text{P2})\enspace\mathop{\text{maximize}}\limits_{\{\mathcal{C}_{m,n}\}_{n\in[N]}, \{\alpha_{m,k}\}_{k\in[K]}} \quad & V \bar{g}_{m}+\sum_{k=1}^K q_{m,k}\bar{\gamma}_{m,k}\alpha_{m,k}\nonumber \\
    \text{subject to}\quad &\text{C}_1,\text{C}_2,
\end{align}
where $V$ is a penalty factor set for balancing $\bar{g}_{m}$ and the fairness constraint. In the forthcoming section, we discuss the identification of Byzantine devices and then solve problem $(\text{P2})$.

\section{Proposed Byzantine-Robust Scheme with Adaptive Clustering}
\label{sec:safe}
To effectively solve problem $(\text{P2})$, we first introduce a ZTA-based Byzantine identification method by examining the reputation of the devices in each round. Then, we decouple the clustering and weighting factor allocation. For clustering, we propose a sequential clustering method and illustrate its optimality for a given set of weighting factors. Substituting the optimal sequential clustering into $(\text{P2})$ and using the knowledge about identified Byzantine devices, we rewrite the optimization of the weighting factors, which is then solved by adopting the P-CCP algorithm.

\subsection{ZTA-Based Byzantine Identification Method}

Given that no participating device is inherently trustworthy, continuous trust evaluation and authentication are required throughout the training process in ZTA. Based on these, the key of zero-trust security lies in establishing trust for each entity within the network, enabling the system to enforce differentiated policies for accessing network resources \cite{ZTAo,TIFS}.
Hence, we continuously identify the Byzantine devices to enable targeted interventions in problem (P2). To achieve this goal, we  employ statistical information derived from historical participation in each communication round for Byzantine identification, a technique used in \cite{byzid1,byzid2}. To this end, we first define the \emph{reputation} of device $k$ at round $m$ by
 \begin{align}\label{reputation}
    r_{m,k} \triangleq  \sum_{t=1}^{m-1} \alpha_{t,k} \bar{\gamma}_{t,k} \mathcal{J}_{t,k},
\end{align}
where 
\begin{align}\label{reputation2}
    \mathcal{J}_{t,k} \!=\!\left\{ 
    \begin{array}{ll}
    1, & \text{device $k$ participates in the  $t$-th aggregation},\\
    -\alpha, & \text{device $k$ is exluced in the  $t$-th aggregation},
    \end{array}\right.
\end{align}
and $\alpha>0$ is a predetermined penalty \cite{reputation}. The formulation in (\ref{reputation})-(\ref{reputation2}) provides a device with positive feedback for reputation if it successfully participates in training. Otherwise it receives negative feedback. Byzantine devices are more likely than honest devices to be excluded from aggregation by robust algorithms, thereby resulting in lower reputation.  Hence, we identify the $M$ devices with the lowest reputation as the Byzantine devices in the current round. 

Accurate Byzantine identification, tailored to different attack strategies, can be achieved by selecting a sufficiently large $\alpha$, thus preventing the occurrence of error traps. 
To support this claim, we define the probability of excluding clusters with Byzantine devices as $p$ and the probability of excluding clusters without Byzantine devices as $q$, with the reasonable assumption that $p>q$ and $q\to0$ in practice\footnote{Although Byzantine behavior may be dynamic and adaptive, the probability of a Byzantine device being excluded over the long term is still higher than that of an honest device, particularly with latest robust aggregation methods like \cite{qdong,ytao}. Otherwise, if it behaves too benignly, it should no longer be regarded as a Byzantine device.}. If a misidentification occurs at round $m$, meaning that at least one honest device $i$ is mistakenly classified as Byzantine and one actual Byzantine device $j$ is missed, their reputations then follow $r_{m,i}<r_{m,j}$. According to the optimal selection strategy for the weighting factors discussed subsequently, devices identified as Byzantine have their weighting factor set to 0, causing their reputations frozen. Then, if we select $\alpha$ satisfying $\frac{1-p}{p}<\alpha\leq\frac{1-q}{q}$, it follows
\begin{align}
    &\mathbb{E}[r_{m^\prime+1,j}]-r_{m^\prime,j}<0, \nonumber \\
    &\mathbb{E}[r_{m^\prime+1,j}]-r_{m^\prime,j}\leq \mathbb{E}[r_{m^\prime+1,v}]-r_{m^\prime,v}  ,\enspace \forall v\neq j,
\end{align}
for any round $m^\prime \geq m$ with Byzantine device $j$ being misidentified. This implies that the reputation of device $j$ is expected to monotonically decrease at the fastest rate until it is eventually identified as a Byzantine device. By repeating this analysis, we conclude that all Byzantine devices are correctly identified in the steady state regardless of the specific type of attack.
Hence, the long-term trust assessment based on ZTA can accurately identify Byzantine devices from historical information.

For faster and accurate identification of Byzantine devices, we need to fine-tune $\alpha$ to adapt to different attack types and parameter settings. In practical deployment, adaptive clustering is not performed during the initial rounds; instead, we focus solely on updating the reputations. This serves two purposes: First, accumulating statistics over multiple rounds provides a more accurate initial identification of Byzantine devices. Second, the statistical results allow us to estimate $p$, which in turn facilitates the empirical selection of $\alpha$.

Leveraging results of Byzantine identification as prior information, we conduct targeted optimization in the following subsections.

\subsection{Clustering Optimization}
Given the optimal weighting factors, we focus on the clustering optimization. Removing irrelevant constants, the problem in $(\text{P2})$ reduces to
\begin{align}
    (\text{P3})\enspace \mathop{\text{maximize}}_{\{\mathcal{C}_{m,n}\}_{n=1}^N} &\quad \sum_{k\in\mathcal{S}_m} v_{m,k} -L\eta \sigma_m^2\nonumber \\
    \text{subject to}&\quad \text{C}_2,
\end{align}
where $v_{m,k}=\alpha_{m,k}\left( \left \Vert \nabla F(\mathbf{w}_m)\right \Vert^2 +(1-L\eta)\left \Vert  \mathbf{g}_m^k \right \Vert^2-\delta_k^2\right)$ is constant. It is evident that the clustering optimization must address the dual objectives: 1) reduce noise, and 2) augment the sum contribution of participating devices.

To begin with, the equivalent noise power is calculated as 
\begin{align}\label{eq17}
    \sigma_{m}^2 \!=\!\sum_{n\in \mathcal{B}_m} \!\frac{ \sigma^2}{2\zeta_{m,n}^2}\!=\!\sum_{n\in \mathcal{B}_m} \!\frac{ \sigma^2G^2}{2P_{\max}}\! \max_{k\in \mathcal{C}_{m,n}}\!\left\{ \frac{\alpha_{m,k}^2}{\vert h_{m,k}\vert^2 \beta_k^2}\right \}.
\end{align}
From the perspective of minimizing the equivalent noise power, we construct the following sequential clustering method. We first sort the equivalent channel conditions of all users as follows $\frac{\vert h_{m,1^\prime}\vert \beta_{1^\prime}}{\alpha_{m,1^\prime}}\leq \frac{\vert h_{m,2^\prime}\vert \beta_{2^\prime}}{\alpha_{m,2^\prime}}\leq \cdots \leq\frac{\vert h_{m,K^\prime}\vert \beta_{K^\prime}}{\alpha_{m,K^\prime}}$. 
Subsequently, we allocate the $\bar{K}$ devices with poorest channel conditions to cluster $1$. We proceed by grouping the next $\bar{K}$ devices with the worst channel conditions among the remaining users into a cluster, continuing this process until all devices are assigned. For given weighting factors $\{\alpha_{m,k}\}_{k\in[K]}$, we denote the clustering results obtained via sequential clustering as $\mathcal{C}_{m,n}\left(\{\alpha_{m,k}\}\right)$ for $n\in [N]$, which are a function of the weighting allocation results.
The following theorem applies to the proposed clustering method.

\begin{theorem}\label{theo1}
    The proposed sequential clustering method is optimal to achieve the minimum equivalent noise power.
\end{theorem}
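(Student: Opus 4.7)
The plan is to frame the clustering problem as minimizing a sum of per-cluster maxima over all partitions of the devices into $N$ blocks of equal size $\bar{K}$, and then apply a pigeonhole/exchange argument to establish a universal lower bound that the sequential construction meets with equality. Let $c_k \triangleq \alpha_{m,k}^2/(|h_{m,k}|^2\beta_k^2)$, so that $\sigma_m^2$ in (\ref{eq17}) is proportional to $\sum_{n\in\mathcal{B}_m}\max_{k\in\mathcal{C}_{m,n}} c_k$. Since the coefficients $c_k$ are non-negative and the subset $\mathcal{B}_m$ only removes terms, it suffices to show that sequential clustering minimizes $\sum_{n=1}^N \max_{k\in\mathcal{C}_{m,n}} c_k$ over all admissible partitions; any clustering that minimizes the full sum also minimizes the restricted sum because the per-cluster maxima in sequential clustering are non-increasing in $n$ and thus dropping any subset can only help further.

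Next, I would reindex the devices so that $c_{(1)}\geq c_{(2)}\geq\cdots\geq c_{(K)}$, equivalently sorting by increasing equivalent channel $|h_{m,k}|\beta_k/\alpha_{m,k}$ as done in the paper. Let $\{\mathcal{C}_{m,n}\}_{n=1}^N$ be any feasible partition with per-cluster maxima $m_n$, relabeled in non-increasing order $m_{[1]}\geq m_{[2]}\geq\cdots\geq m_{[N]}$. The key lemma to prove is that
\begin{align}
m_{[j+1]} \;\geq\; c_{(j\bar{K}+1)}, \qquad j=0,1,\ldots,N-1.
\end{align}
The proof is by pigeonhole: the $j\bar{K}+1$ devices carrying the largest $c_k$ values cannot all fit into $j$ clusters of size $\bar{K}$, so at least $j+1$ distinct clusters must each contain at least one of these largest devices; the maxima of those $j+1$ clusters are therefore all lower-bounded by $c_{(j\bar{K}+1)}$, which in particular gives the claimed inequality on the $(j+1)$-th order statistic. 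Summing over $j$ yields
\begin{align}
\sum_{n=1}^{N}\max_{k\in\mathcal{C}_{m,n}} c_k \;\geq\; \sum_{j=0}^{N-1} c_{(j\bar{K}+1)}.
\end{align}

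Finally, I would verify that sequential clustering attains this lower bound with equality: by construction, the $n$-th cluster contains precisely the devices with ranks $(n-1)\bar{K}+1,\ldots,n\bar{K}$ in the sorted list, so its maximum is exactly $c_{((n-1)\bar{K}+1)}$, and summing over $n$ reproduces the bound. Combining the universal lower bound with this achievability proves Theorem~\ref{theo1}. The main obstacle is really the pigeonhole step, which has to be stated carefully so that the argument is about the $(j+1)$-th largest max across clusters rather than about any specific cluster; once that is pinned down, the rest is routine. If a cleaner exposition is preferred, I would alternatively give a two-line exchange argument showing that if some cluster $n_1$ has a device of larger $c_k$ than some other cluster $n_2$ while cluster $n_2$'s max still dominates cluster $n_1$'s second-largest value, swapping the two strictly cannot increase the objective; iterating this swap converges to the sorted, sequential partition, giving the same conclusion.
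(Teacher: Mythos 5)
Your proof is correct and, in fact, supplies more than the paper does: the paper's own ``proof'' of Theorem~\ref{theo1} is a one-sentence assertion that optimality ``can be easily demonstrated using a proof by contradiction,'' with the argument omitted. Your main route is genuinely different from that intended contradiction/exchange argument (which you also sketch in your closing lines). Sorting the coefficients $c_k=\alpha_{m,k}^2/(|h_{m,k}|^2\beta_k^2)$, proving the order-statistic bound $m_{[j+1]}\geq c_{(j\bar{K}+1)}$ by pigeonhole, summing to obtain the partition-independent lower bound $\sum_{j=0}^{N-1}c_{(j\bar{K}+1)}$, and checking that the sequential partition meets it with equality gives a global optimality certificate in one shot. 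The exchange argument the paper gestures at instead requires showing that a pairwise swap never increases the objective and that iterated swaps terminate at the sorted partition; both routes work, but yours is self-contained and avoids the termination bookkeeping. The pigeonhole step is stated correctly: you bound the $(j+1)$-th largest cluster maximum rather than the maximum of any particular cluster, which is exactly the care needed.

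One step deserves tightening. You assert that because $\mathcal{B}_m$ only removes nonnegative terms, minimizing the full sum $\sum_{n=1}^{N}\max_{k\in\mathcal{C}_{m,n}}c_k$ also minimizes the restricted sum over $n\in\mathcal{B}_m$. That implication does not hold in general: $\mathcal{B}_m$ depends on the partition, so a competing clustering could in principle push its largest per-cluster maxima into clusters that end up excluded, making its restricted sum smaller even though its full sum is larger. This is not a flaw you introduced --- the theorem as stated and the paper's proof sketch ignore $\mathcal{B}_m$ in exactly the same way --- but the clean fix is to prove the statement for the sum over all $N$ clusters (equivalently, treat $\mathcal{B}_m$ as exogenous to the clustering), which is what your core argument actually establishes.
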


\begin{proof}
    According to (\ref{eq17}), we find that the equivalent noise power depends on the worst channel condition among the devices in each cluster.  The optimality of the method can be easily demonstrated using a proof by contradiction, which is omitted here due to the page limits.
    \hfill $\square$
\end{proof}

To increase the overall contribution of honest devices, an imperative is to congregate the Byzantine devices within the same clusters. This facilitates the participation of as many honest devices as possible in the aggregation procedure and thereby maximizing the sum contribution. In the following lemma, we demonstrate the superiority of the sequential clustering scheme in terms of enhancing the contribution of honest devices.

\begin{lemma}\label{pro1}
    With the optimal weighting factor setting, adopting the sequential clustering method maximizes the sum contribution, i.e., $\sum_{k\in\mathcal{S}_m} v_{m,k}$.
\end{lemma}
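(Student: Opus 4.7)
The plan is to combine two observations: (i) any optimal weighting profile must vanish on devices identified as Byzantine, and (ii) under such a profile, sequential clustering concentrates Byzantine devices into the minimum possible number of clusters, which in turn minimizes the honest participation lost to the cosine-similarity filter.

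First, I would establish that an optimal weighting factor setting has $\alpha^*_{m,k}=0$ for every device flagged as Byzantine. Because a Byzantine device is filtered out (its cluster does not belong to $\mathcal{B}_m$), assigning it positive weight consumes budget from the simplex constraint $\text{C}_1$ and thereby shrinks the weight available to honest devices. Since $v_{m,k}$ is linear in $\alpha_{m,k}$ and the bracketed factor $\|\nabla F(\mathbf{w}_m)\|^2+(1-L\eta)\|\mathbf{g}_m^k\|^2-\delta_k^2$ is positive for any honest device with bounded gradient divergence, reallocating budget from Byzantine to honest devices strictly increases the objective of $(\text{P3})$. This yields the structural property of optimal weighting needed in the sequel.

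With $\alpha^*_{m,k}=0$ on Byzantine devices, their equivalent channel quantities $|h_{m,k}|\beta_k/\alpha^*_{m,k}$ are unbounded and thus occupy the very top of the sorted list used by the sequential clustering rule. Consequently, Byzantine devices fill exactly $\lceil M/\bar{K}\rceil$ of the highest-ranked clusters, the smallest number of contaminated clusters compatible with constraint $\text{C}_2$. Since the cosine-similarity filter in the clustering-based robust aggregation scheme excludes every cluster whose aggregate has been corrupted, the resulting $\mathcal{S}_m$ contains only honest activated devices and attains the maximum possible cardinality among all clustering rules. The main claim then follows by a brief contradiction: if some alternative clustering $\{\mathcal{C}'_{m,n}\}$ used with the same $\alpha^*$ attained a strictly larger $\sum_{k\in\mathcal{S}_m}v_{m,k}$, then, since the individual $v_{m,k}$ are fixed once the weights are, it would have to admit a strictly larger set of honest devices and hence to contaminate strictly fewer than $\lceil M/\bar{K}\rceil$ clusters, violating the pigeonhole bound on packing $M$ Byzantine devices into clusters of size $\bar{K}$.

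The main obstacle lies in justifying that every contaminated cluster, and only contaminated clusters, is excluded by the cosine-similarity filter, which is a probabilistic statement rather than a deterministic one. I would handle this either by invoking the steady-state regime of the ZTA identification scheme established earlier, where misidentification vanishes asymptotically, or by working in expectation under the conservative assumption that any Byzantine-bearing cluster contributes negligibly to the honest sum. Either route closes the gap without requiring analytical machinery beyond that already developed in the preceding sections, and the combinatorial argument above then gives the lemma.
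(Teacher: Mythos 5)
Your proposal is correct and follows essentially the same route as the paper's own (very brief) proof: set the weighting factors of Byzantine and inactive devices to zero, which drives their equivalent channels $\vert h_{m,k}\vert\beta_k/\alpha_{m,k}$ to infinity so that sequential clustering packs them into the fewest possible clusters, leaving every honest activated device in an uncontaminated cluster and hence maximizing $\sum_{k\in\mathcal{S}_m}v_{m,k}$. You are in fact more careful than the paper, which states the zero-weight and packing observations in a single paragraph without the explicit pigeonhole count or any acknowledgment that the cosine-similarity filter is probabilistic; the only small imprecision on your side is the phrase ``strictly larger set'' in the contradiction step, which should read that any alternative clustering yields $\mathcal{S}'_m\subseteq\mathcal{D}_m$ while sequential clustering already attains $\mathcal{S}_m=\mathcal{D}_m$, so nonnegativity of the honest contributions closes the argument.
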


\begin{proof}
In the scenario where the PS processes detailed information about Byzantine and inactive devices, it is evident that setting the weighting factors of such devices to 0 is optimal for maximizing $\bar{g}_m$. This entails disregarding updates from these devices and amplifying the influence of the remaining honest devices. Simultaneously, setting the weighting factor to 0 causes the equivalent channel values of these devices to tend to infinity. Consequently, these devices are grouped together and the honest devices can participate effectively in training without being contaminated, thus maximizing the sum contributions $\sum_{k\in\mathcal{S}_m} v_{m,k}$.\hfill $\square$
\end{proof}

According to \emph{Theorem \ref{theo1}} and \emph{Lemma~\ref{pro1}}, we confirm the optimality of the proposed sequential clustering method for a given set of weighting factors. Substituting the optimal $\mathcal{C}_{m,n}\left(\{\alpha_{m,k}\}\right)$ into the original problem in (P2),  we transform the clustering optimization problem into a weighting factor optimization problem without compromising optimality, as discussed in the subsequent subsection.


\subsection{Weighting Factor Optimization}
Now, we are ready to address the weighting factor optimization that results from substituting in the optimal clustering:
\begin{align}
(\text{P4})\enspace \mathop{\text{maximize}}\limits_{\{\alpha_{m,k}\}_{k\in [K]}} \quad & V \bar{g}_{m}+\sum_{k=1}^K q_{m,k}\bar{\gamma}_{m,k}\alpha_{m,k} \nonumber \\
\text{subject to}\quad &\text{C}_1.
\end{align}
According to the proof in \emph{Lemma \ref{pro1}}, it is optimal to assign zero weighting factors to the truncated devices with poor channel conditions and the identified Byzantine devices. Let $\mathcal{D}_m$ denote the set of activated and honest devices at the $m$-th round, which is a deterministic set independent of the weighting factors. Therefore, $\forall k\notin \mathcal{D}_m$, we set $\alpha_{m,k}=0$ and we only need to optimize $\alpha_{m,k}$ for $k\in \mathcal{D}_m$. The relationship between $\mathcal{D}_m$ and the other sets, as well as the effect of setting zero weighting factors in the sequential clustering method, is shown in Fig. \ref{fig:cluster}. 
\begin{figure}[!t]
  \centering
  \centerline{\includegraphics[width=3.2in]{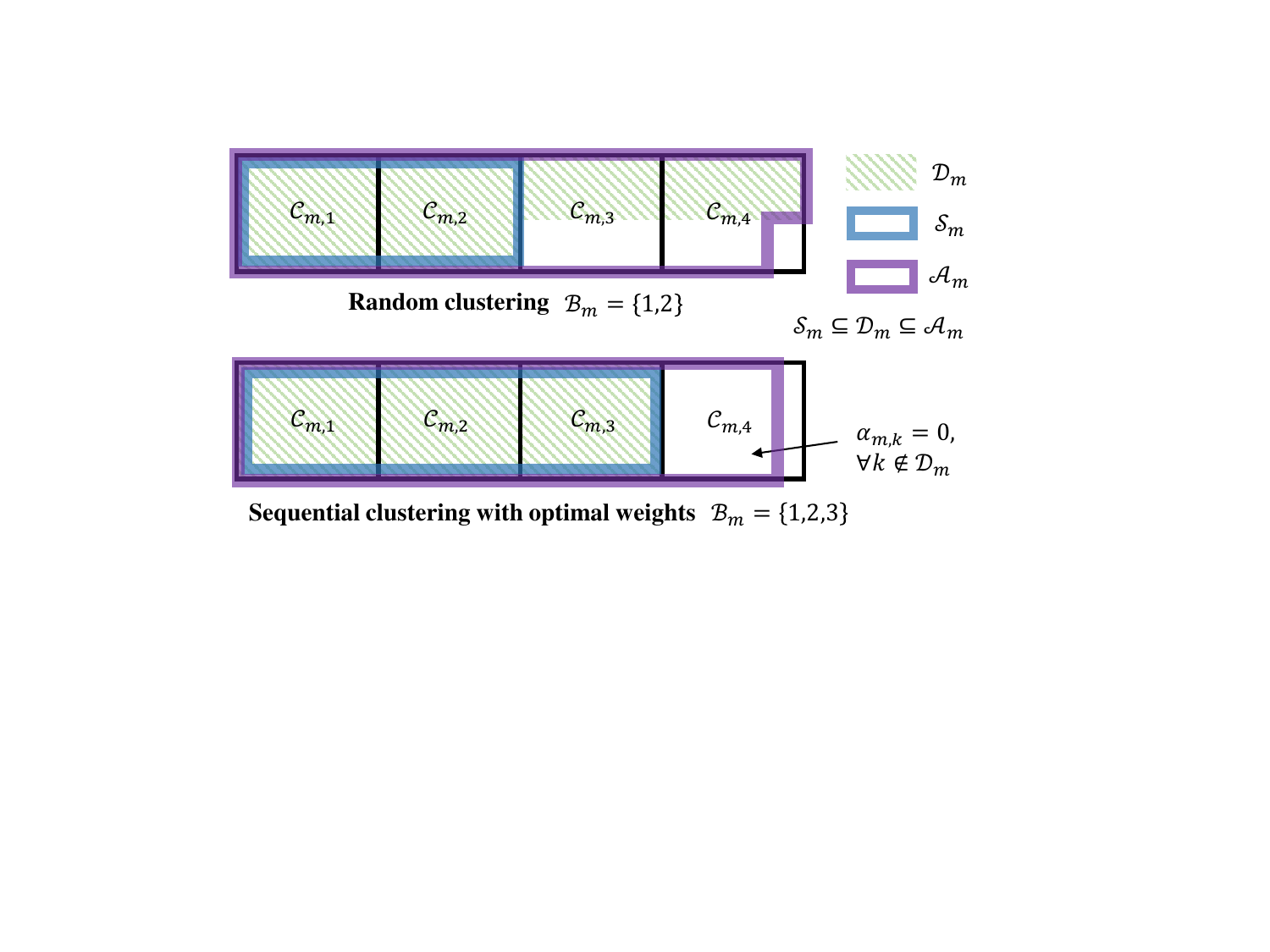}}
  \caption{Schematic diagram of the relationship between different device sets.}\label{fig:cluster}
\end{figure}
Moreover, the honest devices falling into Byzantine clusters will be contaminated and therefore do not require weight assignment. Hence, we have $\sum_{k\in \mathcal{S}_m}\alpha_{m,k}=\sum_{k\in \mathcal{D}_m} \alpha_{m,k}=1$. Then, by removing some constants and zero terms, the objective of problem (P4) can be further reformulated as in (\ref{eq23}) at the top of the next page,
\begin{figure*}
\begin{align} \label{eq23}
    f_m \triangleq \sum_{n\in \mathcal{B}_{m}\left(\{\alpha_{m,k}\}\right)}&\left( \sum_{k\in \mathcal{C}_{m,n}\left(\{\alpha_{m,k}\}\right)}\!\varphi_{m,k}\alpha_{m,k}-\max_{k\in \mathcal{C}_{m,n}\left(\{\alpha_{m,k}\}\right)} \!\left\{\varpi_{m,k}\alpha_{m,k}^2\right \}\right),
\end{align}
\hrulefill
\end{figure*}
where $\varphi_{m,k}=V\gamma_{m,k}+q_{m,k}\bar{\gamma}_{m,k}$ and $\varpi_{m,k}\!= \! \frac{VL  \eta \sigma^2G^2}{2(1\!-\!L\eta)P_{\max}\left\vert h_{m,k}\right\vert^2\beta_k^2}$. Note that both the specific clustering results $\mathcal{C}_{m,n}\left(\{\alpha_{m,k}\}\right)$ and the activated cluster set $\mathcal{B}_m\left(\{\alpha_{m,k}\}\right)$ depend on the weighting factors. Now, problem (P4) is equivalent to
\begin{align}
    (\text{P5})\enspace \mathop{\text{maximize}}\limits_{\{\alpha_{m,k}\}_{k\in \mathcal{D}_m}} \quad &  f_{m} \nonumber \\
    \text{subject to}\quad &\bar{\text{C}}_1: \sum_{k\in \mathcal{D}_m}\alpha_{m,k}=1.
\end{align}
Problem (P5) is neither convex nor smooth and thus challenging to solve.


We consider the ideal case in which the Byzantine devices in the current round have been accurately determined, and the clusters contaminated by them will be successfully identified and excluded from aggregation. Based on this assumption, we present the following theorem to facilitate the effective optimization of problem (P5).

\begin{theorem}\label{th2}
    Problem (P5) is equivalent to 
    \begin{align}
    (\text{P6}):&\mathop{\text{maximize}}\limits_{\substack{\{\alpha_{k}\}_{k\in \mathcal{D}_{m}}\\
    \{u_i\}_{i\in[\bar{N} ]}, \{l_i\}_{i\in[\bar{N}]}\\
    \{e_{i,k}\}_{i\in[\bar{N}],\, k\in \mathcal{D}_{m}}}}
    \enspace \sum_{i=1}^{\bar{N}} \sum_{k=1}^K \varphi_{k}\alpha_{k} e_{i,k}-\sum_{i=1}^{\bar{N}}  u_i^2 \nonumber \\
    &\text{subject to}\quad \bar{\text{C}}_1, \nonumber \\
    &\quad \quad \quad \quad \quad\text{C}_4: \sqrt{\varpi_k}\alpha_{k}e_{i,k}\leq u_i,\enspace \forall k,i,\nonumber \\
    &\quad \quad \quad \quad \quad\text{C}_5:\sqrt{\varpi_k}\alpha_{k}\geq l_i e_{i,k},\enspace \forall k,i,\nonumber \\
    &\quad \quad \quad \quad \quad\text{C}_6: l_{i-1}\geq u_i, \enspace i=2,\cdots,N,\nonumber \\
    &\quad \quad \quad \quad \quad\text{C}_7: e_{i,k}\in \{0,1\},\enspace \forall i,k,\nonumber \\
    &\quad \quad \quad \quad \quad\text{C}_8: \sum_{i=1}^{\bar{N}} e_{i,k}=1,\enspace \forall k,\nonumber \\
    &\quad \quad \quad \quad \quad\text{C}_9: \sum_{k \in \mathcal{D}_{m}} e_{i,k}=\bar{K},\enspace \forall i,
\end{align}
where $\alpha_{m,k}$, $\varphi_{m,k}$, and $\varpi_{m,k}$ are abbreviated as $\alpha_{k}$, $\varphi_{k}$, and $\varpi_{k}$, respectively, $\bar{N}\triangleq \left \vert \mathcal{B}_m \right \vert$, and $u_i$, $l_i$, and $e_{i,k}$ are auxiliary variables. In particular, $u_i$ and $l_i$ respectively represents the upper and lower bounds for the value of $\sqrt{\varpi_k\alpha_k}$ in cluster $i$. $e_{i,k}$ indicates whether the $k$-th device is in the $i$-th cluster.
\end{theorem}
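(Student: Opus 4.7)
The plan is to prove equivalence by exhibiting a correspondence between feasible solutions of (P5) and (P6) that preserves the objective value, so that both problems share the same optimum. The starting point is to invoke the ideal identification assumption together with Lemma \ref{pro1} and Theorem \ref{theo1}: optimality of (P5) forces $\alpha_{m,k}=0$ for every inactive or identified Byzantine device, concentrates those devices into the contaminated clusters, and leaves $\mathcal{D}_m$ partitioned into exactly $\bar{N}=|\mathcal{B}_m|$ honest clusters of equal size $\bar{K}$. The outer sum in $f_m$ therefore collapses onto these $\bar{N}$ clusters only, reducing (P5) to the joint selection of a size-balanced partition of $\mathcal{D}_m$ and the weights $\{\alpha_k\}_{k\in\mathcal{D}_m}$ with $\sum_{k\in\mathcal{D}_m}\alpha_k=1$.

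I would then encode this reduced problem into (P6). The partition is represented by the binary indicators $e_{i,k}$, and constraints $\text{C}_7$, $\text{C}_8$, $\text{C}_9$ enforce exactly that each device in $\mathcal{D}_m$ belongs to precisely one of $\bar{N}$ clusters, each of size $\bar{K}$. The linear term $\sum_{k\in\mathcal{C}_{m,i}}\varphi_{k}\alpha_{k}$ from the inner summand of (P5) is then rewritten as $\sum_{k}\varphi_{k}\alpha_{k}e_{i,k}$, recovering the first part of the (P6) objective. For the nonsmooth max $\max_{k\in\mathcal{C}_{m,i}}\varpi_{k}\alpha_{k}^{2}$, I would introduce $u_{i}$ subject to $\text{C}_4$. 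Since $-u_{i}^{2}$ appears in the (P6) objective, optimality drives the constraint tight and yields $u_{i}=\max_{k\in\mathcal{C}_{m,i}}\sqrt{\varpi_{k}}\alpha_{k}$, reproducing the quadratic max term exactly.

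The final ingredient is the sequential clustering order mandated by Theorem \ref{theo1}. Constraint $\text{C}_5$ forces $l_{i}\le\min_{k\in\mathcal{C}_{m,i}}\sqrt{\varpi_{k}}\alpha_{k}$, and $\text{C}_6$ demands $l_{i-1}\ge u_{i}$. Chained together these give $\min_{k\in\mathcal{C}_{m,i-1}}\sqrt{\varpi_{k}}\alpha_{k}\ge\max_{k\in\mathcal{C}_{m,i}}\sqrt{\varpi_{k}}\alpha_{k}$ for each adjacent pair of clusters, which is precisely the sequential rule that assigns devices with the smallest equivalent channel gains to the lowest-indexed clusters. Taking $l_{i}=\min_{k\in\mathcal{C}_{m,i}}\sqrt{\varpi_{k}}\alpha_{k}$ therefore lifts any (P5)-optimal sequentially clustered configuration into a (P6)-feasible point with identical objective value. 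Conversely, any $(\{\alpha_k\},\{e_{i,k}\},\{u_i\},\{l_i\})$ feasible for (P6) decodes to a sequentially ordered partition whose (P5) objective equals the (P6) objective, establishing the two directions of the equivalence.

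The main obstacle I anticipate is the reverse direction of this correspondence, specifically the handling of ties in the values $\sqrt{\varpi_{k}}\alpha_{k}$. When several devices share the same equivalent channel, $\text{C}_6$ can only be satisfied with equality $l_{i-1}=u_{i}$, and one must verify that the tie-breaking assignment of these boundary devices to clusters remains consistent with both $\text{C}_4$ and $\text{C}_5$ and yields the same objective contribution as in (P5). I would resolve this by the same interchange argument used to prove the optimality of sequential clustering in Theorem \ref{theo1}: swapping a boundary device between two adjacent clusters does not alter the two cluster maxima or the sum of linear contributions, so the degeneracy is harmless and the equivalence is preserved.
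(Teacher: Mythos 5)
Your proposal is correct and follows essentially the same route as the paper's own proof: both reduce (P5) under the ideal-identification assumption to a size-balanced partition of $\mathcal{D}_m$ into $\bar{N}$ clusters, encode the partition with the binary indicators $e_{i,k}$ (constraints $\text{C}_7$--$\text{C}_9$), replace the max term by the epigraph variable $u_i$ via $\text{C}_4$, and enforce the sequential-clustering order through $l_i$, $\text{C}_5$, and $\text{C}_6$. Your additional observations --- that the $-u_i^2$ term drives $\text{C}_4$ tight at optimality, and that ties in $\sqrt{\varpi_k}\,\alpha_k$ are harmless by an interchange argument --- go slightly beyond what the paper writes but do not change the argument.
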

\begin{proof}
    Please refer to Appendix \ref{app_th2}. \hfill $\square$
\end{proof}

Problem (P5) is a nonconvex MINLP problem of an intractable form. To tackle it with the discrete constraint $\text{C}_7$, we decompose $\text{C}_7$ into two equivalent constraints as follows:
\begin{align}\label{eq19}
    \text{C}_7&\Leftrightarrow e_{i,k}(e_{i,k}-1)=0,\enspace\forall i,k,\nonumber \\
    &\Leftrightarrow \left \{
    \begin{array}{ll}
         \text{C}_{10}: e_{i,k}(e_{i,k}-1)\geq 0,&  \forall i,k,\\
         \text{C}_{11}: e_{i,k}(e_{i,k}-1)\leq 0,&  \forall i,k.
    \end{array}
    \right.
\end{align}
By replacing constraint $\text{C}_7$ with constraints $\text{C}_{10}$ and $\text{C}_{11}$, problem (P3) is transformed into a continuous but nonconvex problem. To handle the nonconvexity, we adopt the P-CCP framework\cite{pccp}, which is proved to be effective in obtaining a feasible stationary point of the original problem in an iterative manner. Specifically, constraints $\text{C}_{10}$ and $\text{C}_{11}$ in (\ref{eq19}) are reformulated as
\begin{align}
    &\bar{\text{C}}_{10}:e_{i,k}(e_{i,k}-1)\leq  p_{i,k},\enspace \forall i,k, \nonumber \\
    &\bar{\text{C}}_{11}:(2e_{i,k}^{(n)}-1)e_{i,k}-\left(e_{i,k}^{(n)}\right)^2\geq - p_{i+N,k+K}, \enspace \forall i,k,\nonumber \\
    &\text{C}_{12}:\mathbf{p}\geq 0,
\end{align}
where $\mathbf{p}=[p_{1,1},\cdots,p_{2N,2K}]^T$ is a slack vector with
nonnegative elements, and $e_{i,k}^{(n)}$ is the solution obtained in the $n$-th iteration. Then, to guarantee the satisfaction of (\ref{eq19}), we impose penalty terms with respect to $\mathbf{p}$ in the original objective function, i.e., $\sum_{i=1}^{\bar{N}} \sum_{k=1}^K \varphi_{k}\alpha_{k} e_{i,k}-\sum_{i=1}^{\bar{N}}  u_i^2-\tau^{(n)} \Vert \mathbf{p} \Vert_1$, where $\tau^{(n)}$ is the penalty parameter in the $n$-th iteration. With a  sufficiently large penalty factor, the constraints in (\ref{eq19}) can be strictly met.

Moreover, the variable $e_{i,k}$ is coupled with $\alpha_k$ and $l_i$ in constraints $\text{C}_4$ and $\text{C}_5$, which further complicates the problem. To tackle the coupled variables, we reformulate the function $\alpha_k e_{i,k}$ as $\frac{1}{4}\left(\alpha_k +e_{i,k}\right )^2-\frac{1}{4}\left(\alpha_k -e_{i,k}\right )^2$. Although this decouples the variables, the function is nonconvex. Hence, to find an effective linear approximation, we introduce the following lower and upper bounds for the function:
\begin{align}
    &L^{(n)}(\alpha_k,e_{i,k})\!\triangleq  \!\frac{1}{2}\left(\alpha_k^{(n)}\!+\!e_{i,k}^{(n)}\right) (\alpha_k\!+\!e_{i,k})\!-\!\frac{1}{4}\left(\alpha_k^{(n)}\!+\!e_{i,k}^{(n)}\right)^2\nonumber\\
    &\quad \quad \quad\quad \quad \quad \enspace -\frac{1}{4}\left(\alpha_k -e_{i,k}\right )^2,\nonumber \\
    &U^{(n)}(\alpha_k,e_{i,k})\!\triangleq \!\frac{1}{4}\left(\alpha_k\!+\!e_{i,k}\right )^2\!-\!\frac{1}{2}\left(\alpha_k^{(n)}\!-\!e_{i,k}^{(n)}\right) (\alpha_k\!-\! e_{i,k})\nonumber\\
    &\quad \quad \quad\quad \quad \quad \enspace +\frac{1}{4}\left(\alpha_k^{(n)}-e_{i,k}^{(n)}\right)^2,
\end{align}
where $L^{(n)}(\alpha_k,e_{i,k})\leq \alpha_k e_{i,k} \leq U^{(n)}(\alpha_k,e_{i,k})$, and $\alpha_k^{(n)}$ is the solution obtained for $\alpha_k$ in the $n$-th iteration. Then, the term $\alpha_k e_{i,k}$ in the objective of (P5) can be replaced by its lower bound $L^{(n)}(\alpha_k,e_{i,k})$ and constraint $\text{C}_4$ becomes
\begin{align}
    \bar{\text{C}}_4: \sqrt{\varpi_k} U^{(n)}(\alpha_k,e_{i,k}) \leq u_i,\enspace \forall k,i.
\end{align}
Similarly, constraint $\text{C}_5$ is replaced by
\begin{align}
    \bar{\text{C}}_5: \sqrt{\varpi_k} \alpha_k  \geq U^{(n)}(l_i,e_{i,k}) ,\enspace \forall k,i.
\end{align}

\begin{figure*}[!t]
  \centering
  \centerline{\includegraphics[width=6.2in]{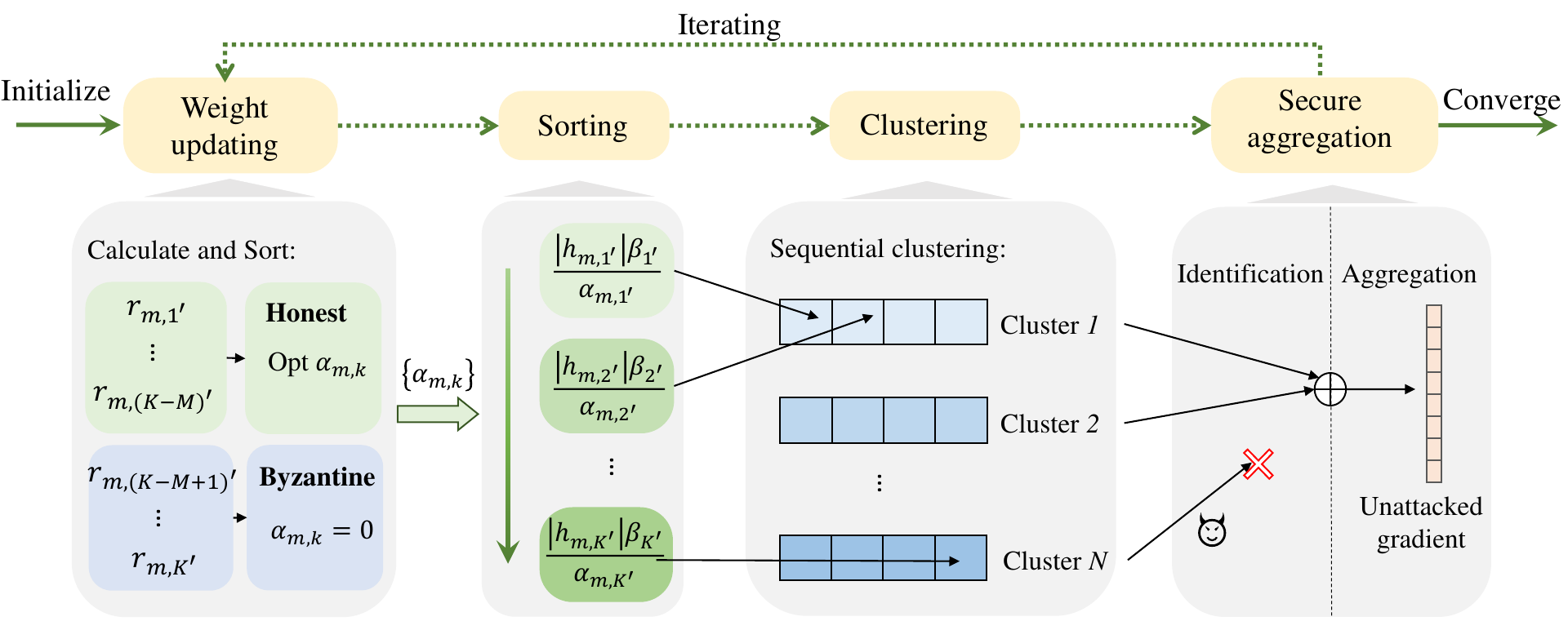}}
  \caption{Architecture of the proposed ZTA-empowered FedSAC approach.}\label{fig:res}
\end{figure*}

Applying these transformations, we construct a sequence of convex surrogate problems, the $n$-th of which is
\begin{align} \label{eq24}
    (\text{P7}):&\mathop{\text{maximize}}\limits_{\substack{\{\alpha_{k}\}_{k\in \mathcal{D}_{m}}\\
    \{u_i\}_{i\in[\bar{N}]}, \{l_i\}_{i\in[\bar{N}]}\\
    \{e_{i,k}\}_{i\in[\bar{N}],\, k\in\mathcal{D}_{m}},\mathbf{p}}}
    \enspace  \tilde{f}_{m}^{(n)}\nonumber \\
    &\text{subject to}\quad \bar{\text{C}}_1, \bar{\text{C}}_4, \bar{\text{C}}_5,\text{C}_6, \text{C}_8,\text{C}_9,\bar{\text{C}}_{10}, \bar{\text{C}}_{11},\text{C}_{12},
\end{align}
where $\tilde{f}_{m}^{(n)}\triangleq \sum_{i=1}^{\bar{N}} \sum_{k=1}^K \varphi_{k}L^{(n)}(\alpha_{k}, e_{i,k})-\sum_{i=1}^{\bar{N}}  u_i^2 -\tau^{(n)} \Vert \mathbf{p} \Vert_1$. Problem (P6) can be efficiently solved by numerical convex program solvers, e.g., CVX
\cite{grant2009cvx}. According to \cite{pccp}, the computational complexity of solving (P7) is $\mathcal{O}\left(K^{3.5}N^{3.5} \right)$.

To summarize, we conclude the steps in Algorithm \ref{alg0}. The penalty parameter is set as a small value at first to seek better performance, and then increases proportionally with the coefficient $\mu$ to avoid constraint violation. The penalty is upper-bounded by $\tau^{\max}$ is an upper limit of the penalty parameter, which helps the algorithm avoid numerical instability. The iteration terminates when the norm of the violation vector $\mathbf{p}$ is sufficiently small, i.e., $\Vert \mathbf{p}\Vert_1 \leq \chi$, so that the integer constraint is approximately satisfied.

\begin{algorithm}[!t]
\caption{P-CCP for Weighting Optimization} \label{alg0}
\begin{algorithmic}[1]  
\REQUIRE $n=0$, $\rho^{(0)}$, $\mu>1$, $e_{i,k}^{(0)}$, $\alpha_k^{(0)}$, and $l_i^{(0)}$.
\REPEAT
\STATE  Update $e_{i,k}^{(n+1)}$, $\alpha_k^{(n+1)}$, and $l_i^{(n+1)}$ by solving the problem in (\ref{eq24}).
\STATE Update $\tau^{(n+1)}=\min\left\{ \mu\tau^{(n)},\tau^{\mathrm{max}} \right \}$.
\STATE Update  $n=n+1$.
\UNTIL$\Vert \mathbf{p}\Vert_1 \leq \chi$ and the objective converges.
\end{algorithmic} 
\end{algorithm}

\subsection{Overall Framework and Performance Analysis}
The overall framework of the proposed FedSAC approach is depicted in Fig. \ref{fig:res}. At the beginning of each communication round, according to the previously updated reputation, the PS identifies the honest and Byzantine devices. For honest devices, their weighting factors are jointly optimized via Algorithm \ref{alg0} to maximize the loss reduction under the fairness constraints. With optimized weighting factors, the sequential clustering method is then applied. Finally, hierarchical AirComp with device clustering, combined with the clustering-based robust aggregation scheme, are conducted, thereby obtaining unattacked gradients for model updating.
In summary, we conclude the FedaSAC-based AirFL algorithm in Algorithm~\ref{alg1}.


In the following theorem, we evaluate the convergence performance of AirFL using FedSAC by deriving the optimality gap, which is the difference between the optimal and actual achieved value of the loss function.

\begin{algorithm}[!t]
  \caption{FedSAC-based AirFL algorithm}
  \label{alg1}
  \begin{algorithmic}[1]
  \REQUIRE the initial model parameters, $\mathbf{w}_0$, the iteration number, $m=0$, the number of initialization rounds, $m_0$, the initial reputation $r_{0,k}=0$, and weighting factors $\alpha_{0,k}=\frac{1}{K}$ for all $k$.
  \REPEAT
  \STATE The PS refreshes the virtual queue in (19) and the reputation according to (21).
  \STATE The PS broadcasts $\mathbf{w}_m$ to all devices.
  \STATE Each device  computes the local gradient in (3) based on their local dataset and $\mathbf{w}_m$.
  \IF {$m\geq m_0$}
    \STATE Each device reports $\Vert \mathbf{g}_m^k \Vert$ to the PS.
    \STATE The weighting factors are optimized by Algorithm 1.
  \ENDIF
  \STATE The PS performs sequential clustering to group the devices into $N$ clusters according to the equivalent channel conditions.
  \FOR{$i=1,\cdots,N$}
  \STATE The devices in cluster $i$ perform AirComp according to their weighting factors.
  \STATE The PS receives the aggregated gradient of cluster $i$.
  \ENDFOR
  \STATE The PS discards the malicious gradients and conducts the model update according to (12).
  \STATE Increment $m\gets m+1$.
  \UNTIL{converges} 
  \end{algorithmic}
\end{algorithm}

\begin{theorem}\label{theo3}
    For a fixed learning rate satisfying $\eta\leq \min\left\{\frac{1}{L},\frac{1}{\mu}\right\}$, the optimality gap after $T$ communication rounds satisfies
    \begin{align}\label{eq32}
        \mathbb{E}\left[F(\mathbf{w}_{T})\right]-F^*  \leq& \prod_{m=0}^{T-1}B_m \left(F(\mathbf{w}_0) -F^*\right)\nonumber \\
        &+\sum_{i=0}^{T-2}C_i \prod_{j=i+1}^{T-1}B_j+C_{T-1},
    \end{align}
    where $B_m\triangleq 1-\mu \eta \sum_{k\in\mathcal{S}_m}\alpha_{m,k}$ and $C_m \triangleq \frac{\eta}{2}\sum_{k\in \mathcal{S}_m}\alpha_{k,m} \delta_k^2 -\frac{\eta\!-\!L\eta^2}{2}\sum_{k\in\mathcal{S}_m}\alpha_{m,k}G^2+\frac{L\eta^2}{2}\sigma_m^2$.
\end{theorem}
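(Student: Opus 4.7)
My plan is to convert the one-step loss-reduction bound of Lemma~\ref{lemma1} into a linear recursion on the optimality gap $a_m \triangleq \mathbb{E}[F(\mathbf{w}_m)] - F^*$, and then unroll this recursion $T$ times. The resulting recursion will take the form $a_{m+1} \leq B_m a_m + C_m$ with $B_m$ and $C_m$ exactly as defined in the theorem, from which (\ref{eq32}) drops out by a direct induction on $T$.

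To obtain the recursion, I would start from $\mathbb{E}[F(\mathbf{w}_{m+1})] - F(\mathbf{w}_m) = g_m \leq \bar{g}_m$ given by Lemma~\ref{lemma1}, and expand $\bar{g}_m$ in (\ref{eq14}) into three groups: the negative gradient-norm piece $-\tfrac{\eta}{2}(\sum_k \alpha_{m,k})\|\nabla F(\mathbf{w}_m)\|^2$, the stochastic-gradient piece $-\tfrac{\eta - L\eta^2}{2}\sum_k \alpha_{m,k}\|\mathbf{g}_m^k\|^2$ together with $+\tfrac{\eta}{2}\sum_k\alpha_{m,k}\delta_k^2$, and the aggregated noise term $\tfrac{L\eta^2}{2}\sigma_m^2$. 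Applying the Polyak--{\L}ojasiewicz inequality (Assumption~3) to the first piece yields $-\tfrac{\eta}{2}(\sum_k \alpha_{m,k})\|\nabla F(\mathbf{w}_m)\|^2 \leq -\mu\eta(\sum_k \alpha_{m,k})(F(\mathbf{w}_m) - F^*)$; adding $F(\mathbf{w}_m) - F^*$ to both sides then produces the contraction factor $B_m = 1 - \mu\eta\sum_k \alpha_{m,k}$. Substituting the uniform gradient bound $\|\mathbf{g}_m^k\|\leq G$ into the second piece and collecting the remaining deterministic pieces reproduces exactly the $C_m$ stated in the theorem. The hypothesis $\eta \leq \min\{1/L, 1/\mu\}$ is used here both to make $\eta - L\eta^2 \geq 0$ (so the signs in $C_m$ are as claimed) and, together with $\sum_k\alpha_{m,k}\leq 1$, to ensure $B_m \in [0,1]$, which keeps the recursion contractive and well-defined.

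The theorem then follows by iterating $a_{m+1} \leq B_m a_m + C_m$. A short induction on $m$ gives $a_T \leq \prod_{m=0}^{T-1} B_m \cdot a_0 + \sum_{i=0}^{T-2} C_i \prod_{j=i+1}^{T-1} B_j + C_{T-1}$, and since $\mathbf{w}_0$ is deterministic we have $a_0 = F(\mathbf{w}_0) - F^*$, which is precisely (\ref{eq32}).

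The main obstacle I anticipate is book-keeping the randomness hidden inside $\mathcal{S}_m$, $B_m$, and $C_m$: the set of devices that survive channel truncation and pass the Byzantine-robust inspection is itself random through $\mathcal{A}_m$ and $\mathcal{B}_m$, so the one-step recursion is really a conditional statement. I would resolve this by applying each inequality conditionally on the filtration of channel realizations, Byzantine identifications, and cluster-inspection outcomes up to round $m$, iterating the conditional recursion, and invoking the tower property of expectation at the end. Aside from this technical care, the argument is purely algebraic and relies only on Lemma~\ref{lemma1}, Assumptions~1--3, and the bound $\|\mathbf{g}_m^k\|\leq G$; no new analytical tools are required.
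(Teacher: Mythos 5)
Your proposal follows essentially the same route as the paper's Appendix C: apply the Polyak--{\L}ojasiewicz inequality to the gradient-norm term of Lemma~\ref{lemma1} to obtain the one-step contraction $a_{m+1}\leq B_m a_m + C_m$, absorb the remaining terms into $C_m$ via the bound $\Vert \mathbf{g}_m^k\Vert\leq G$, and unroll the recursion by induction. Your extra care about conditioning on the filtration of channel/inspection outcomes before invoking the tower property is a reasonable refinement the paper glosses over; note also that both you and the paper replace $-\tfrac{\eta-L\eta^2}{2}\Vert\mathbf{g}_m^k\Vert^2$ by $-\tfrac{\eta-L\eta^2}{2}G^2$ inside $C_m$, a step that (since the coefficient is negative for $\eta\leq 1/L$) actually requires a lower bound on $\Vert\mathbf{g}_m^k\Vert$ rather than the stated upper bound, so you inherit the paper's own loose step rather than introducing a new one.
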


\begin{proof}
    Please refer to Appendix \ref{app_th3}. \hfill $\square$
\end{proof}

\begin{remark}
    According to (\ref{eq32}), the convergence rate of FL is solely contingent on parameter $B_m$, while the optimality gap is influenced by both $B_m$ and $C_m$. The proposed FedSAC approach allocates all weighting factors to activated honest devices through Byzantine identification, minimizing $B_m$ and thereby accelerating the convergence process. On the other hand, for $C_m$, through sequential clustering and weighting optimization, the total contribution is increased and the noise is reduced, resulting in an enhancement of the final achievable optimality gap. This result substantiates the effectiveness of the FedSAC approach.
\end{remark}

\section{Numerical Results}
\label{sec:typestyle}

In this section, we present numerical simulations to evaluate the performance of the proposed FedSAC scheme. The large-scale fading $\beta_{k}$ is modelled as $d_{k}^{-1.1}$, where $d_{k}$ denotes the distance between the PS and device $k$ and is uniformly distributed between 150 and 500 m. Moreover, we assume that the Byzantine devices transmit at full power to maximize resistance to convergence. Unless otherwise specified, the other parameters are set as follows: number of devices $K=40$, number of Byzantine devices $M=6$, noise power $\sigma^2 =10^{-6}$, maximum transmit power $P_{\max}=0$ dBm, penalty factor $V=10^5$, and the number of initialization rounds $m_0=10$.

For FL performance evaluation, we consider the image classification task based on the well-known MNIST and CIFAR-10 datasets. The MNIST dataset contains 10 classes of handwritten digits ranging from 0 to 9 and we train a  multi-layer perceptron (MLP) for it via AirFL. The MLP contains three fully-connected layers and 23,860 parameters. The CIFAR-10 dataset also includes 10 classes with labels 0-9 and we train a convolutional neural network (CNN) for it. The trained CNN is made up of two convolutional layers and three fully connected layers. We use max pooling operation following each convolutional layer and the RELU activation function. Without loss of generality, we adopt relatively simple models for simulation. Since our proposed FedSAC approach is confined to the physical layer and remains independent of specific models or algorithms, it can also be applied to more complicated scenarios. Moreover, it is noteworthy that all local datasets are non-IID, exclusively comprising a single category of labels.  We set $\delta_k$ via simulation tests. The learning rate is set as $\eta=0.005$. We consider the following three typical types of Byzantine attacks \cite{tspbay,onebit}:
\begin{itemize}
    \item Sign-flipping attack: the Byzantine devices set $\tilde{\mathbf{g}}_{m,k}=-\sum_{k\in \mathcal{H}}\mathbf{g}_{m,k}$, where $\mathcal{H}$ denotes the set of honest devices, attempting to guide the trained model towards a direction opposite the desired one.
    \item Gaussian attack: the Byzantine devices randomly select $\tilde{\mathbf{g}}_{m,k}$ from a Gaussian distribution with mean $\mathbf{1}$, where $\mathbf{1}$ denotes an  all-one vector.
    \item Label-flipping attack: the Byzantine devices reverse the labels of their local datasets, by replacing the label $x$ of each training sample with $9-x$, and setting $\tilde{\mathbf{g}}_{m,k}$ as the gradient calculated by the incorrect labels.
\end{itemize}

\begin{figure*}[!t]
	\centering
	\begin{minipage}[t]{0.32\linewidth}
		\centering
		\includegraphics[width=1\linewidth]{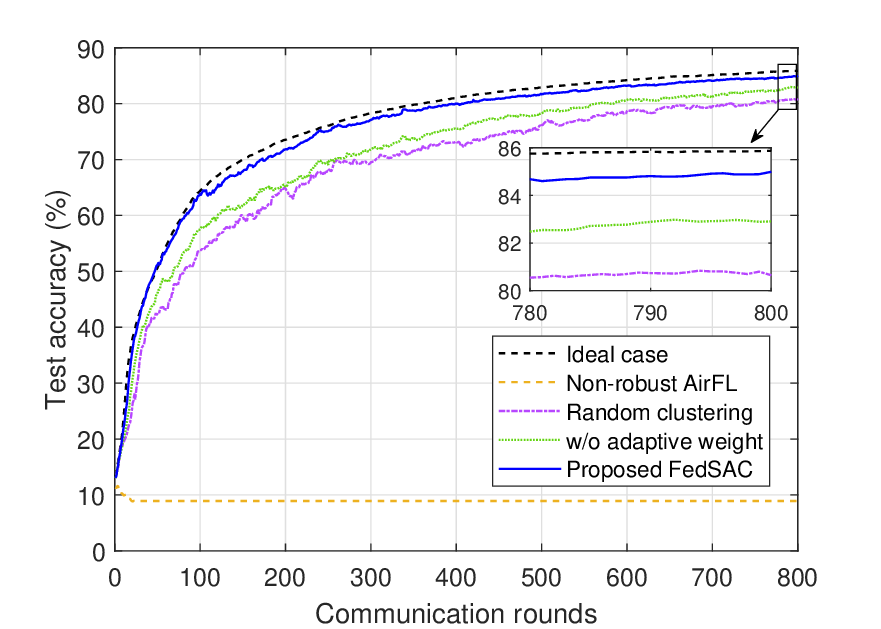}
	\end{minipage}
	\begin{minipage}[t]{0.32\linewidth}
		\centering
		\includegraphics[width=1\linewidth]{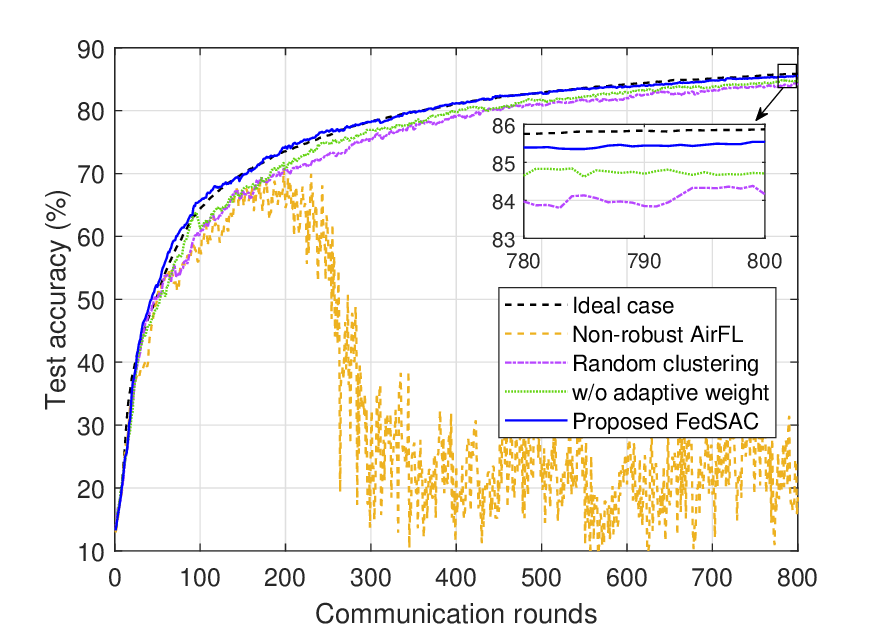}
	\end{minipage}
	\begin{minipage}[t]{0.32\linewidth}
		\centering
		\includegraphics[width=1\linewidth]{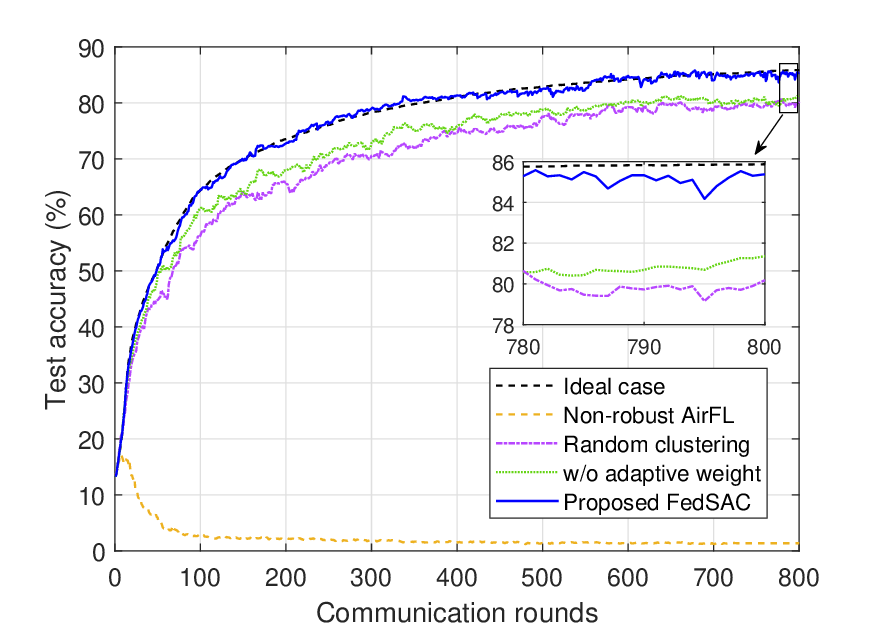}
	\end{minipage}
 \caption{Test accuracy for the MNIST dataset and different types of attacks. (a) Sign-flipping attack (b) Gaussian attack (c) Label-flipping attack\label{fig3}.}
\end{figure*}

\begin{figure*}[!t]
	\centering
	\begin{minipage}[t]{0.32\linewidth}
		\centering
		\includegraphics[width=1\linewidth]{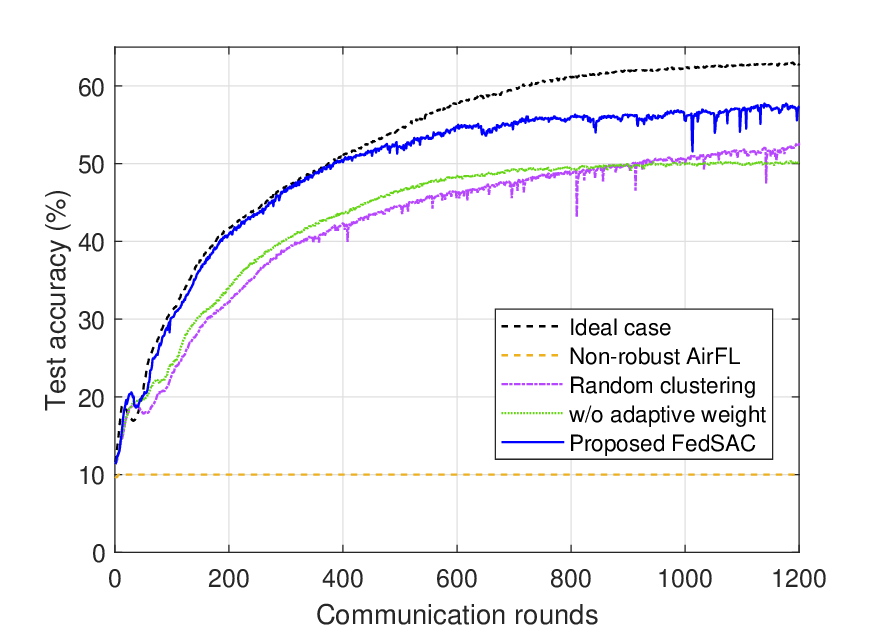}
	\end{minipage}
	\begin{minipage}[t]{0.32\linewidth}
		\centering
		\includegraphics[width=1\linewidth]{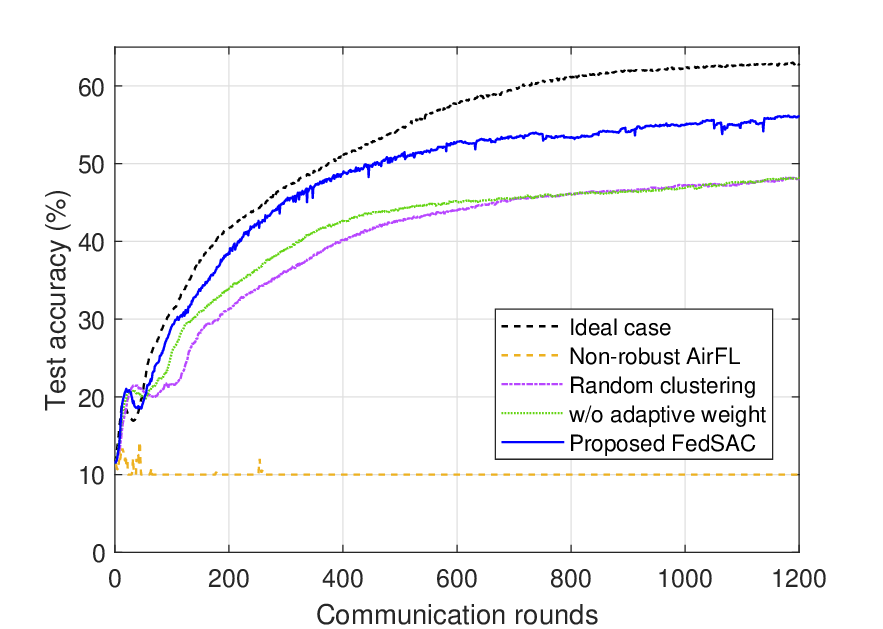}
	\end{minipage}
	\begin{minipage}[t]{0.32\linewidth}
		\centering
		\includegraphics[width=1\linewidth]{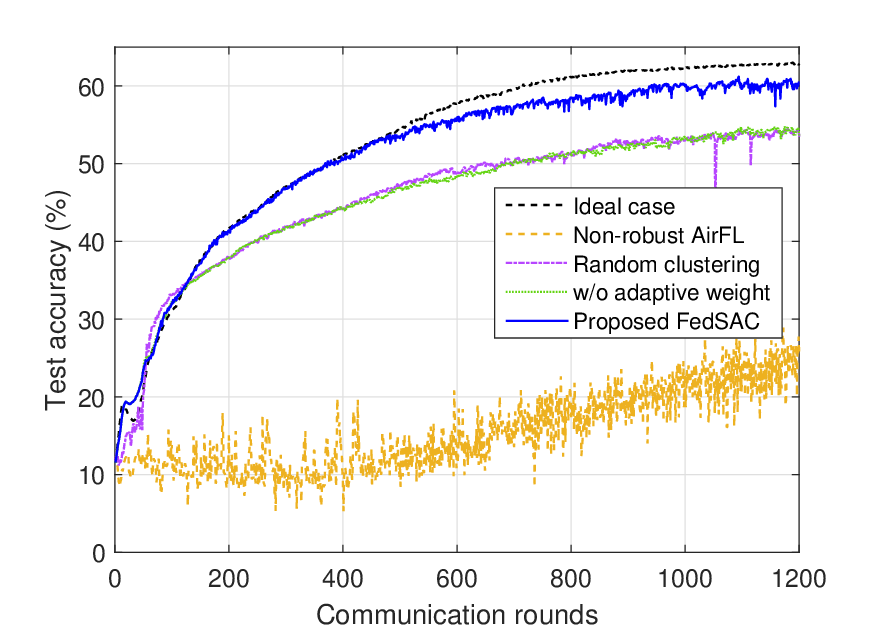}
	\end{minipage}
 \caption{Test accuracy for the CIFAR-10 dataset and different types of attacks. (a) Sign-flipping attack, (b) Gaussian attack, and (c) Label-flipping attack\label{fig4}.}
\end{figure*}

We compare our proposed algorithms with the following baseline schemes:
\begin{itemize}
    \item Ideal case: There are no Byzantine devices nor additive noise, representing the ideal FL setting. Its performance serves as an upper bound for the proposed approach.
    \item Non-robust AirFL\cite{imperfect}: No additional robust design is performed and the traditional AirFL scheme is used. It can be understood as the lower bound and highlights the necessity of robust design.
    \item Random clustering \cite{park,liye1,liye2}: Clustering-based robust aggregation is performed and device clustering is randomly determined.
    \item w/o adaptive weighting: The sequential clustering is conducted only for noise minimization and the weighting factors are set as $\frac{1}{K}$.
\end{itemize}
The last two baselines can be interpreted as ablation experiments of the proposed approach, thus validating the effectiveness of each component in the proposed FedSAC. It is worth emphasizing that the proposed approach does not limit the robust aggregation method, and the main innovation focuses on the design of device clustering. Therefore, without loss of generality, we consider a representative robust aggregation method and the main focus of the numerical comparisons is on the performance gains brought by different device clustering methods.
Moreover, the implementation details and source codes are available online at \cite{sourcecode}.

\subsection{Learning Performance for MNIST Dataset}
We first evaluate the learning performance with the MNIST dataset in Fig. \ref{fig3}. FedSAC is seen to outperform the existing baselines across all types of attacks, with a negligible performance gap compared to the ideal case. For the case of the sign-flipping attack, FedSAC exhibits a notable enhancement of approximately 4.5\% in terms of test accuracy when juxtaposed with existing random clustering schemes. FedSAC achieves performance comparable to random clustering in approximately 420 communication rounds, thereby reducing the number of rounds by 47.5\%. In the other two scenarios, the performance gain provided by FedSAC is smaller, although it still achieves a notable improvement in accuracy and accelerated convergence. This discrepancy arises due to the fact that the employed similarity detection method can more easily detect the sign-flipping attack, which enhances the accuracy and efficiency of Byzantine identification. In the case of Gaussian and label-flipping attacks, the PS requires more communication rounds to identify the Byzantine devices, leading to a reduction in the observed gain. Certainly, the observed performance loss could potentially be compensated by enhancing the robust aggregation algorithm adopted. However, such improvements extend beyond the primary scope of our research and are consequently not included in this study. In the initial phases of training, it is noteworthy that FedSAC can exhibit performance that surpasses even the ideal case. This is attributed to the adjustment of the weighting factors based on device contribution, promoting faster convergence than that of the design with equal weights.

It is also evident that a performance gain can be achieved solely through the sequential clustering approach, due to the reduction in equivalent noise. Through a comparative analysis of the three methods, the ablative experiment validates the roles of each module in FedSAC.

\subsection{Learning Performance for CIFAR-10 Dataset}
Fig. \ref{fig4} depicts the test accuracy of different schemes for the CIFAR-10 dataset. 
As in the previous case, FedSAC maintains exceptional performance under diverse attack scenarios, although the performance gap compared with the ideal case is widened. This is due to the increased complexity of the classification task for the CIFAR-10 dataset compared to MNIST and and the heightened difficulty of learning an accurate global model for heterogeneous local datasets.
In the initial training phases, dynamic weight optimization leads the PS to prioritize devices with greater contributions, accelerating convergence and yielding performance comparable to the ideal situation. Subsequently, the imbalance in the weights may introduce bias towards specific devices, particularly for non-IID local datasets, causing a significant performance gap. Owing to the relative simplicity of the task for the MNIST dataset, the impact of this bias is not significant, resulting in no pronounced performance degradation.

In contrast to the baseline schemes with attackers, the heightened importance of weight optimization is evident in its implicit regulation of clustering. This leads to fewer honest gradients being contaminated, yielding substantial performance gains. In the presence of the three types of attacks, FedSAC  demonstrates improvements of 6.2\%, 7.1\%, and 5.9\% in teasting accuracy compared with the random clustering method. Moreover, FedSAC reduces the number of communication rounds needed to achieve equivalent accuracy by 60\%, 67.5\%, and 54\%, further affirming its outstanding performance. It is worth noting that in the CIFAR-10 dataset, relying solely on sequential clustering does not yield significant performance gains and may even result in degradation. This is because by prioritizing devices based on their channel conditions, the sequential clustering may cause an imbalance in the participation of different devices, thus causing a biased global model. To elaborate, honest devices sharing similar channels with Byzantine devices are more likely to be placed in the same cluster as the Byzantine devices. This leads to their reduced participation in the training process, thus elucidating the need for incorporating fairness in the  optimization.

\subsection{Impact of Transmit Power}
    In Fig. \ref{fig5}, we compare the test accuracy for different transmit power budgets using the MNIST dataset under sign-flipping attack. We see that as the transmit power decreases, the performance degradation of FedSAC is comparatively less pronounced, showcasing its robustness to noise. In contrast, especially with random clustering schemes, the test accuracy experiences a significant decline at lower transmit power levels. This result is due to the  sequential clustering method which reduces the equivalent noise power and improves the utilization of transmit power, thereby providing a significant advantage at low SNRs.  In high SNR regions where the influence of noise is minimal, the performance gain primarily stems from weighting allocation optimization based on device contributions.

\subsection{Impact of the Number of Byzantine Devices}
Fig. \ref{fig6} exhibits the test accuracy versus the number of Byzantine devices for the different schemes. 
As the number of Byzantine devices increases, the performance of methods degrades. In contrast to the baseline schemes, FedSAC showcases superior robustness against the Byzantine attack. Despite the escalating number of Byzantine devices, the proposed approach is able to effectively group them into specific clusters, resulting only in a marginal performance degradation. Conversely, both random grouping and sequential clustering experience a steep decline in performance as the number of attacking devices increases.

\begin{figure}[!t]
\centering
\includegraphics[width=3.2in]{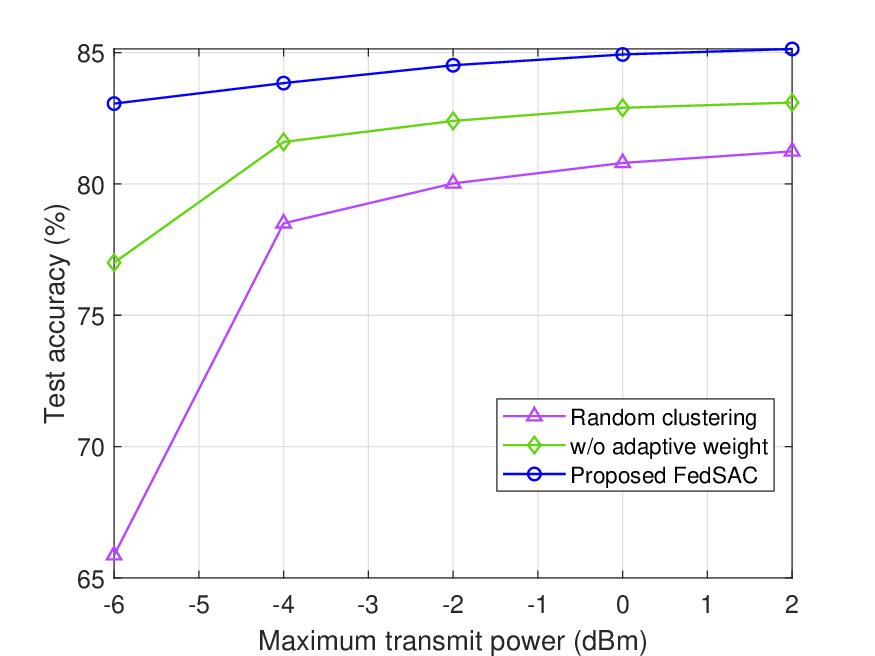}
\caption{ Test accuracy versus maximum transmit power\label{fig5}.}
\end{figure}

\begin{figure}[!t]
\centering
\includegraphics[width=3.2in]{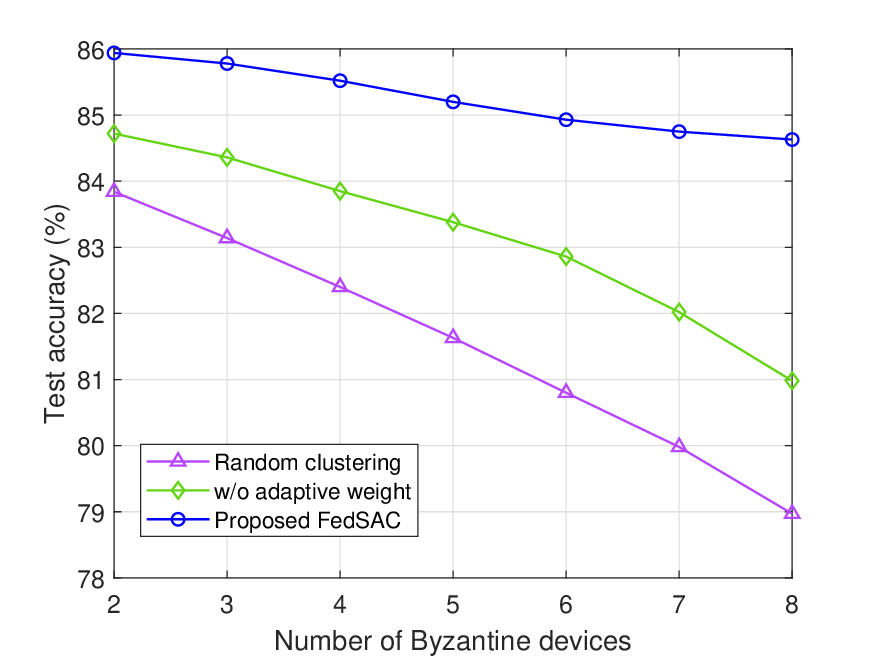}
\caption{ Test accuracy versus the number of Byzantine devices.}\label{fig6}
\end{figure}

\subsection{Impact of the Number of Clusters}
In Fig. \ref{fig7}, we investigate the influence of the number of clusters. For fairness considerations, we presume equal spectrum resource allocation across all clusters, with distinct time slots designated for AirComp within each cluster. Subsequently, we evaluate the test accuracy after an equivalent training duration, concentrating solely on the AirComp-induced delay and disregarding local computation delays. As the number of clusters is increased, there is enhanced support for a more efficient Byzantine-robust aggregation algorithm, leading to a reduction in the number of honest gradients contaminated in each round and consequently improving performance. Nevertheless,  this escalation concomitantly entails a heightened consumption of time-frequency resources, diminishing the number of communication rounds within the specified latency constraints, thereby introducing adverse effects. This is evident that the test accuracy depicted in Fig. \ref{fig7} exhibits an initial ascent followed by a decline as the number of clusters increases. 
The different methods exhibit maximum test accuracy at distinct cluster counts. Notably, FedSAC attains optimality with two clusters, whereas random clustering  requires five clusters for optimal performance. This discrepancy arises from the incorporation of Byzantine identification and adaptive clustering in FedSAC, enabling efficient concentration of the Byzantine devices within the same cluster. Consequently, even with only two clusters, effective participation of partial honest devices is ensured. In contrast, random clustering results in the even distribution of Byzantine devices across the clusters, motivating the need for more clusters to enhance reliability. From this standpoint, FedSAC demonstrates more efficient resource utilization and exhibits a notable performance advantage compared to the baseline schemes.

\begin{figure}[!t]
\centering
\includegraphics[width=3.2in]{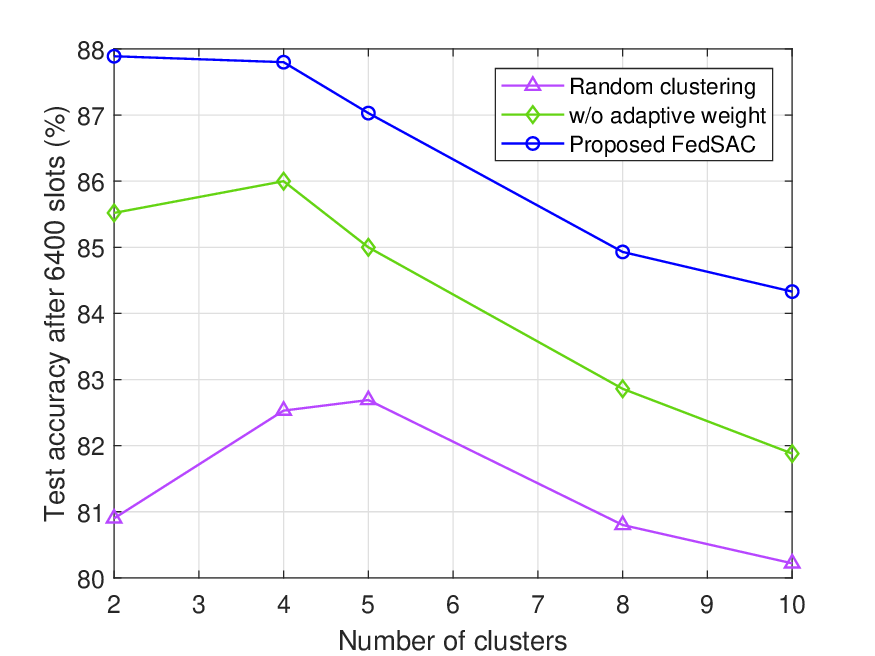}
\caption{ Test accuracy versus the number of clusters.}\label{fig7}
\end{figure}

\begin{figure}[!t]
\centering
\includegraphics[width=3.2in]{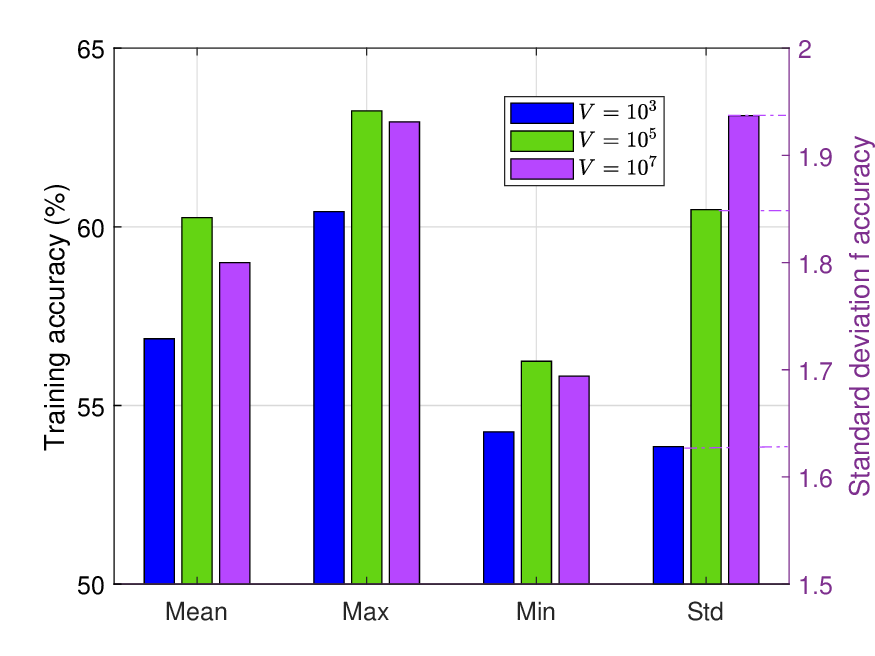}
\caption{ Statistics of different devices' training accuracy for different $V$\label{fig8}.}
\end{figure}

\subsection{Impact of Penalty Factor $V$}

By varying the penalty factor $V$ in the Lyapunov transformation, we effectively tune for fairness among various devices. In Fig. \ref{fig8}, we illustrate the effect of the penalty factors on the training accuracy of distinct devices.  Given the increased sensitivity to heterogeneity in the classification task for the CIFAR-10 dataset, and recognizing its consequential impact on fairness, we employ the CIFAR-10 dataset for evaluation. We use the standard deviation of the training accuracy among the involved honest devices as the measure of fairness. Evidently, with an increase in $V$, the training accuracy gap among devices becomes more conspicuous, indicating a deterioration in fairness. A excessive value for $V$ may lead to a reduction in fairness, resulting in a biased global model and diminished performance. Conversely, an overly small $V$ can give rise to a loss in performance by placing undue emphasis on fairness while overlooking the influence of channel and gradient characteristics in the current round. Hence, it is imperative to judiciously select the value of $V$ for better performance.

\section{Conclusion}
\label{sec:conclusion}
In this paper, we proposed a novel ZTA-based FedSAC scheme for AirFL to combat Byzantine attacks. Based on clustering, we constructed multiple individual gradients from different clusters and facilitated the implementation of existing robust aggregation methods. Based on the one-step convergence analysis, we formulated a joint optimization problem involving clustering and weighting factor allocation with theoretical guarantees. The incorporation of Byzantine device identification, sequential clustering, and adaptive weighting effectively addresses the difficulties associated with  the optimization problem and plays a pivotal role in achieving substantial performance improvements. The simulation results verified the effectiveness of the proposed FedSAC approach and confirmed its superiority over baseline approaches from various perspectives. The proposed approach offers a practical solution for deploying AirFL with ZTA in resource-constrained and vulnerable wireless networks, highlighting the crucial role of Byzantine attack defense in real-world applications.

\appendices 
\section{Proof of Lemma \ref{lemma1}}\label{app_lemma1}
To begin with, by exploiting \emph{Assumption 1}, we bound the expected per round loss $g_m$ as
\begin{align}\label{app1}
    g_m&\leq \mathbb{E}\left[ \nabla^T F(\mathbf{w}_m) (\mathbf{w}_{m+1}\!-\!\mathbf{w}_m)\!+\!\frac{L}{2}\left \Vert \mathbf{w}_{m+1}\!-\!\mathbf{w}_m\right \Vert^2 \right]\nonumber \\
    &\overset{\text{(a)}}{=}\underbrace{-\eta \nabla^T F(\mathbf{w}_m) \sum_{k\in \mathcal{S}_m}\alpha_{m,k}\mathbf{g}_m^k}_{A_1} \nonumber \\
    &\quad+\underbrace{\frac{L\eta^2}{2}\left \Vert \sum_{k\in \mathcal{S}_m}\!\alpha_{m,k}\mathbf{g}_{m,k}\right \Vert^2 }_{A_2}+\frac{L\eta^2}{2}\mathbb{E}\left[\left \Vert \sum_{n\in\mathcal{B}_m} \mathbf{z}_n\right \Vert^2\right]\nonumber\\
    &=A_1+A_2+\frac{L\eta^2}{2}\sigma_m^2,
\end{align}
where (a) comes from (\ref{eq11}). Next, we first rewrite $A_1$ as
\begin{align}\label{app2}
    A_1&=-\eta \sum_{k\in\mathcal{S}_m}\alpha_{m,k}\nabla^T F(\mathbf{w}_m) \mathbf{g}_m^k\nonumber \\
    &=-\frac{\eta}{2} \sum_{k\in\mathcal{S}_m}\alpha_{m,k} \left(\left \Vert \nabla F(\mathbf{w}_m) \right \Vert^2+\left \Vert \mathbf{g}_m^k \right \Vert^2 \right.\nonumber \\
    &\left.\quad-\left \Vert \nabla F(\mathbf{w}_m)-\mathbf{g}_m^k \right \Vert^2 \right) \nonumber \\
    &\overset{\text{(b)}}{\leq} -\frac{\eta}{2} \sum_{k\in\mathcal{S}_m}\alpha_{m,k} \left(\left \Vert \nabla F(\mathbf{w}_m) \right \Vert^2+\left \Vert \mathbf{g}_m^k \right \Vert^2 \!-\!\delta_k^2 \right),
\end{align}
where (b) exploits \emph{Assumption 2}. As for $A_2$, we have 
\begin{align}\label{app3}
A_2&\overset{\text{(c)}}{\leq } \frac{L\eta^2}{2}\sum_{k\in\mathcal{S}_m}\alpha_{m,k} \left \Vert \mathbf{g}_m^k \right \Vert^2,
\end{align}
where (c) is due to the convexity of the function $\Vert \cdot \Vert^2$ and the fact that $\sum_{k\in\mathcal{S}_m}\alpha_{m,k}\leq 1$. Plugging (\ref{app2}) and (\ref{app3}) into (\ref{app1}), we complete the proof. 

\section{Proof of Theorem \ref{th2}} \label{app_th2}
First, under the assumption of perfect identification,  we can calculate the number of clusters that can effectively participate in the aggregation, i.e., $\bar{N} =  N-\frac{M}{\bar{K}}$,
where $\frac{M}{\bar{K}}$ is assumed here to be an integer for simplicity. Note that the objective function in (\ref{eq23}) is related to the specific clustering results, which are implicitly determined by the weighting factors. For a more tractable form, we introduce the new variables $ \{e_{i,k}\}_{i\in[\bar{N}],\, k\in \mathcal{D}_{m}}$ to characterize the clustering results. Specifically, if $e_{i,k}$ is equal to $1$, it signifies that the $k$-th device is located in cluster $i$, and otherwise it is equal to $0$. Also, considering the max-min form of the noise-related term, we introduce $u_i$ and $l_i$ to respectively represent the maximum and minimum value of $\sqrt{\varpi_{k}}\alpha_k$ in cluster $n$. With the above variables, the objective in (\ref{eq23}) is reformulated as
\begin{align}
    \sum_{i=1}^{\bar{N}} \sum_{k=1}^K \varphi_{k}\alpha_{k} e_{i,k}-\sum_{i=1}^{\bar{N}}  u_i^2.
\end{align}

To ensure equivalence with the original formulation, we need to add the following constraints for the auxiliary variables introduced above. To begin with, since each device is located in only one cluster and each cluster contains $\bar{K}$ devices, the variables $ \{e_{i,k}\}_{i\in[\bar{N}],\, k\in \mathcal{D}_{m}}$ must satisfy $\sum_{i=1}^{\bar{N}}e_{i,k}=1$, and $\sum_{k\in\mathcal{D}_{m}}e_{i,k}=\bar{K}$. Furthermore, the variable $u_i$ is defined as an upper bound for $\sqrt{\varpi_{k}}\alpha_{k}$ in cluster $n$. Hence, it is constrained by
\begin{align}
    \sqrt{\varpi_{k}}\alpha_{k} e_{i,k}\leq u_i,\enspace \forall i,
\end{align}
where $e_{i,k}$ is used to used to determine whether device $k$ is within cluster $i$ and the constraint always holds if $e_{i,k}=0$. Likewise, the lower bound $l_i$ must satisfy $\sqrt{\varpi_{k}}\alpha_{k} \geq l_i e_{i,k},\enspace \forall k$. Furthermore, considering that the specific clustering results are determined by the proposed sequential clustering strategy, there is a sequential relationship between the equivalent channels of the devices within different clusters. Specifically, the equivalent channel of a device within a previous cluster is always worse than the one in the following cluster, i.e., with a larger $\sqrt{\varpi_{k}}\alpha_k$. Thus, to satisfy the clustering rule, the lower bound for $\sqrt{\varpi_{k}}\alpha_k$ in the previous cluster needs to be greater than the maximum value of the following cluster, namely $l_{i-1}\geq u_i$. Combining all transformations and constraints, we arrive at problem (P3) and the proof completes.

\section{Proof of Theorem \ref{theo3}} \label{app_th3}
Substituting \emph{Assumption 3} into (\ref{eq14}) and exploiting the fact that $\Vert \mathbf{g}_{m}^k\Vert \leq G$, we have
\begin{align}
    g_m &\leq -\mu \eta \left (\sum_{k\in \mathcal{S}_m}\alpha_{m,k}\right )\left(F(\mathbf{w}_{m})-F^*\right)+\frac{\eta}{2}\sum_{k\in \mathcal{S}_m}\alpha_{m,k} \delta_k^2 \nonumber \\
    & \quad-\frac{\eta\!-\!L\eta^2}{2}\sum_{k\in\mathcal{S}_m}\alpha_{m,k}\left \Vert  \mathbf{g}_m^k \right \Vert^2+\frac{L\eta^2}{2}\sigma_m^2\nonumber \\
    &\leq -\mu \eta  \left (\sum_{k\in\mathcal{S}_m}\alpha_{m,k} \right)\left(F(\mathbf{w}_{m})-F^*\right)+C_m,
\end{align}
where $C_m \triangleq \frac{\eta}{2}\sum_{k\in \mathcal{S}_m}\alpha_{k,m} \delta_k^2 -\frac{\eta\!-\!L\eta^2}{2}\sum_{k\in\mathcal{S}_m}\alpha_{m,k}G^2+\frac{L\eta^2}{2}\sigma_m^2$.
Letting $B_m=1-\mu \eta \sum_{k\in\mathcal{S}_m}\alpha_{m,k}$, we further obtain
\begin{align}
    \mathbb{E}&\left[F(\mathbf{w}_{T})\right]-F^* \nonumber \\
    &\leq B_{T-1} \left(\mathbb{E}\left[F(\mathbf{w}_{T-1})\right]-F^*\right)+C_{T-1}\nonumber \\
    &\leq B_{T-1}\left(B_{T-2} \left(\mathbb{E}\left[F(\mathbf{w}_{T-1})\right]-F^*\right)+C_{T-2}\right)+C_{T-1}\nonumber \\
    & \leq \!\prod_{m=0}^{T-1}\!B_m \!\left(F(\mathbf{w}_0)\! -\!F^*\right)+\sum_{i=0}^{T-2}C_i \prod_{j=i+1}^{T-1}B_j+C_{T-1}.
\end{align}
This completes the proof.

\bibliographystyle{IEEEtran}
\bibliography{strings}

\end{document}